\newenvironment{varalgorithm}[1]
  {\algorithm}
  {\endalgorithm}
\newenvironment{list3}{
	\begin{list}{$\bullet$}{%
			\setlength{\itemsep}{0.05cm}
			\setlength{\labelsep}{0.2cm}
			\setlength{\labelwidth}{0.3cm}
			\setlength{\parsep}{0in} 
			\setlength{\parskip}{0in}
			\setlength{\topsep}{0in} 
			\setlength{\partopsep}{0in}
			\setlength{\leftmargin}{0.22in}}}
	{\end{list}}
\newenvironment{list4}{
	\begin{list}{$\bullet$}{%
			\setlength{\itemsep}{0.05cm}
			\setlength{\labelsep}{0.2cm}
			\setlength{\labelwidth}{0.3cm}
			\setlength{\parsep}{0in} 
			\setlength{\parskip}{0in}
			\setlength{\topsep}{0in} 
			\setlength{\partopsep}{0in}
			\setlength{\leftmargin}{0.16in}}}
	{\end{list}}
\newenvironment{list4a}{
	\begin{list}{$\bullet$}{%
			\setlength{\itemsep}{0.05cm}
			\setlength{\labelsep}{0.2cm}
			\setlength{\labelwidth}{0.3cm}
			\setlength{\parsep}{0in} 
			\setlength{\parskip}{0in}
			\setlength{\topsep}{0in} 
			\setlength{\partopsep}{0in}
			\setlength{\leftmargin}{0.16in}}}
	{\end{list}}
\let\mathbb=\mathds % I much prefer the dsfont over the bbfont
\newtheorem{theorem}{Theorem}
\newtheorem{defn}{Definition}
\newtheorem{assum}{Assumption}
\newtheorem{remark}{Remark}
\newtheorem{lemma}{\bfseries Lemma}
\begin{document}

%\title{\LARGE \bf Optimal CPU Scheduling in Data Centers \\ via a Finite Time Communication Efficient Algorithm}
\title{\LARGE \bf Distributed Computation of Exact Average Degree and Network Size \\ in Finite Number of Steps under Quantized Communication}

%\author{ \parbox{3 in}{\centering Huibert Kwakernaak*
%         \thanks{*Use the $\backslash$thanks command to put 
%information here}\\
%         Faculty of Electrical Engineering, Mathematics and 
%Computer Science\\
%         University of Twente\\
%         7500 AE Enschede, The Netherlands\\
%         {\tt\small h.kwakernaak@autsubmit.com}}
%         \hspace*{ 0.5 in}
%         \parbox{3 in}{ \centering Pradeep Misra**
%         \thanks{**The footnote marks may be inserted manually}\\
%        Department of Electrical Engineering \\
%         Wright State University\\
%         Dayton, OH 45435, USA\\
%         {\tt\small pmisra@cs.wright.edu}}
%}

\author{Apostolos~I.~Rikos, Themistoklis Charalambous, Christoforos N. Hadjicostis, and Karl~H.~Johansson
\thanks{Apostolos~I.~Rikos and K.~H.~Johansson are with the Division of Decision and Control Systems, KTH Royal Institute of Technology, SE-100 44 Stockholm, Sweden. E-mails: {\tt \{rikos,kallej\}@kth.se}.}
\thanks{{T.~Charalambous and} C.~N.~Hadjicostis {are} with the Department of Electrical and Computer Engineering, University of Cyprus, 1678 Nicosia, Cyprus.  E-mails:{\tt \{charalambous.themistoklis,chadjic\}@ucy.ac.cy}.}
}

\maketitle
\thispagestyle{empty}
\pagestyle{empty}

% ===============================================
%
%
% ABSTRACT
%
%
% ===============================================
\begin{abstract}
We consider the problems of computing the average degree and the size of a given network in a distributed fashion under quantized communication. 
%Computing the average {outgoing} degree involves determining the average number of edges from each node. This can be useful in various scenarios, such as infection propagation and antidote distribution to control epidemics. 
%Computing the network size involves determining in a distributed way the number of nodes in a network. Knowing the network size {is} useful in several scenarios such as load balancing and detection of topological changes. 
We present two distributed algorithms which rely on quantized operation (i.e., nodes process and transmit quantized messages), and are able to calculate the exact solutions in a finite number of steps. 
Furthermore, during the operation of our algorithms, each node is able to determine in a distributed manner whether convergence has been achieved and thus terminate its operation.
To the best of the authors' knowledge these algorithms are the first to find the exact solutions under quantized communication (i.e., there is no error in the final calculation).
Additionally, note that our network size calculation algorithm is the first in the literature which calculates the exact size of a network in a finite number of steps without introducing a final error. This error in other algorithms can be either due to quantization or asymptotic convergence. In our case, no error is introduced since the desired result is calculated in the form of a fraction involving a quantized numerator and a quantized denominator.  
%We analyze the operation of our algorithms, and prove that they calculate the desired values. {Note that when a leader is not already assigned, the algorithms converge with high probability, since the leader election process is successful with high probability.} 
Finally, we demonstrate the operation of our algorithms and their potential advantages.
\end{abstract}

% \begin{IEEEkeywords} 
% \todo{FIX} 
% \end{IEEEkeywords}

% ===============================================
%
%
% INTRODUCTION
%
%
% ===============================================
\section{Introduction}\label{intro}

%In recent years distributed control and coordination of multi-agent systems has attracted a lot of interest from the research community due to their applicability in various scenarios \cite{2008:Cortes, 2012:Varagnolo_Schenato}. 
Many distributed algorithms which operate over multi-agent systems rely on knowledge of the network's parameters, such as the average degree and/or the size of the network. 
Knowledge of the average node degree or the size of the network plays an important role in various applications, including consensus based distributed optimization \cite{Rabbat:2018IEEEProceedings}, infection propagation strategies \cite{2007:Leskovec_Faloutsos}, antidote distribution to control epidemics \cite{2010:Borgs_Saberi}, and the networked prisoner's dilemma game \cite{2010:Tang_Wang}. 
Furthermore, knowing the network's parameters is important for detecting topological changes \cite{2014:Lucchese_Johansson}, estimating the maximum and the minimum of the initial measurements in the presence of noise via a soft-max operation \cite{2013:Zhang_Spanias}, 
% distributed clustering over wireless networks \cite{2015:Oliva}, 
and controlling renewable energy resources, while maintaining an average degree that fulfils structural properties \cite{2010:christoforos}.

Various methods have been proposed in the literature for calculating the size of a given network. 
Current approaches rely on statistical methods that require the exchange of excessive information between nodes, random walk strategies, random sampling, and capture-recapture strategies \cite{2010:Varagnolo_Schenato, 2012:Garin_Johansson, 2006:Massoulie_Ganesh, 2009:Peng_Xiao, 2015:Lucchese_Hendrickx, 2019:Kenyeres_Kenyeres, 2021:Kenyeres_Budinska}. 
Furthermore, the problem of calculating the average degree of a given network has been analyzed in \cite{2012:Ribeiro_Towsley, 2009:Hay_Jensen, 2014:Dasgupta_Sarlos}; however, finding its solution in a distributed fashion has received limited attention. 
Specifically, only \cite{2012:Shames_Johansson} presents an asymptotic distributed algorithm for computing the average degree of a network. 
%This approach requires each node to transmit real valued messages of infinite precision (which significantly increases bandwidth requirements), and exhibits asymptotic convergence which introduces a final error on the calculation of the desired result. 
To the best of the authors' knowledge, every algorithm in the literature which calculates the size and the average degree of a given network operates with real values and exhibits asymptotic convergence. 
However, operation with real values in practice requires channels with infinite bandwidth (since an irrational number requires an infinite number of bits for representation). 
Therefore, nodes need to approximate such numbers, but still require high bandwidth.
This is undesired in practical applications, since the communication overhead of each node becomes the major operational bottleneck over large scale networks. 
Furthermore, asymptotic convergence introduces a final error on the calculated result because most algorithms need to be terminated after a pre-defined finite number of iterations. 
This error may lead to imprecise calculation of the desired quantity, which may be of significant magnitude for the case of large scale networks.  
% This leads to significant increments on each node's bandwidth requirements, and also introduces a final error on the calculated result{, since the algorithms are usually terminated in a finite number of steps}. 
%The aim of this paper is to address the aforementioned problems since the area of 
In summary, calculating network parameters in a finite number of steps with quantized communication remains largely unexplored.

% In practical applications (such as voting protocols, detection of topological changes, and infection propagation), imprecise calculations of the desired result may lead to 
% As a result, calculating the 

%\noindent
\textbf{Main Contributions.}
Our paper is a major departure from the current literature and aims to bridge this gap. 
Compared to existing approaches, the operation of our algorithms allows nodes to process and transmit quantized values. It is important to note that quantized operation allows more efficient usage of network resources compared to real valued operation, as nodes require less bits to store and transmit information. 
%Furthermore, our algorithms converge to the \textit{exact} result in a finite number of steps \TC{with high probability}. 
Furthermore, compared to algorithms that exhibit asymptotic convergence, our algorithms converge in a finite number of steps without introducing any final error, as nodes compute the final result in the form of a quantized fraction. 
%This means that current approaches require an infinite number of time steps to calculate the exact values of the parameters which our algorithms are able to calculate in finite time. 
%In this paper, we present two novel distributed algorithms for computing the average degree and the size of a given network, in a finite number of steps. 
%Their operation relies on processing and transmitting quantized values, and on event-driven state updates. 
The main contributions are the following. 
\begin{list4}
\item We present a novel distributed algorithm for computing the average degree of a given network. 
Our algorithm operates with quantized values and is able to calculate the exact result without any error; see Algorithm~\ref{algorithm_averagedegree}. 
\item We present a novel distributed algorithm for computing the size of a given network. 
Our algorithm operates with quantized values and calculates the exact result without introducing any final error; see Algorithm~\ref{algorithm_networksize}. 
Furthermore, the algorithm's operation relies on the election of a leader node (if a leader is not already assigned/decided). 
For this reason, we present a novel strategy for leader election with quantized processing and communication. 
Note that this is the first leader election strategy which relies on quantized operation; see Algorithm~\ref{algorithm_LeaderElection}. We show that the leader election strategy converges to the exact result (i.e., the election of one leader node) after a small number of time steps, with high probability; see Theorem~\ref{thm:size_estimation}. {Note that when a leader is not already assigned, the algorithm converges to the exact result with high probability, since the leader election process is successful with high probability.}
\item We show that both our algorithms converge in finite time.
For both algorithms, we provide upper bounds on the number of time steps needed for convergence. 
Our provided bounds rely on a known upper bound on the diameter rather than the size of the network. 
\item Both algorithms utilize a distributed stopping strategy with which nodes are able to determine whether convergence has been achieved. %It is the first method in the literature
%This is the fist distributed stopping strategy 
%which allows convergence to the exact solution without a final error. 
% \item We present simulations of our proposed algorithms where we show their finite time convergence to the exact solution. 
% Furthermore, we present simulations of the leader election strategy where we demonstrate its finite time converge; see Section~\ref{sec:results}. 
\end{list4}
% The main advantage of our algorithms is that they allow calculation of the \textit{exact} solution in the form of a quantized fraction without introducing a final error (either due to quantization or due to asymptotic convergence). 

%The main advantage of our proposed algorithms is that each node's state is represented as a quantized fraction. The numerator of the initial fraction is the node's initial state value and the denominator is equal to one. Then, it transmits the initial fraction to a randomly chosen neighbor node. If two or more fractions are transmitted to the same node, then the node sums the numerator and the denominator and forms a new fraction.  In this way, after a finite number of time steps each node will receive a fraction whose numerator is the sum of each node's initial state value and the denominator is equal to the number of nodes in the network. 

%In the current literature every algorithm operates with real values and exhibits asymptotic convergence within some error. 

%Specifically, for the case of computing the network size, since errors over the \todo{Compare with previous results} 

% ===============================================
%
%
% NOTATION
%
%
% ===============================================
\section{Notation and Preliminaries}\label{sec:preliminaries}

The sets of real, rational, and integer numbers are denoted by $ \mathbb{R}, \mathbb{Q}$, and $\mathbb{Z}$, respectively. 
The symbol $\mathbb{Z}_{\geq 0}$  ($\mathbb{Z}_{>0}$) denotes the set of nonnegative (positive) integer numbers (similarly  $\mathbb{Z}_{\leq 0}$, $\mathbb{Z}_{<0}$). 
%The symbols $\mathbb{Z}_{\leq 0}$ ($\mathbb{Z}_{<0}$) denote the sets of nonpositive (negative) integer numbers. 
% For any real number $a \in \mathbb{R}$, the floor $\lfloor a \rfloor$ denotes the greatest integer less than or equal to $a$ while the ceiling $\lceil a \rceil$ denotes the least integer greater than or equal to $a$. 
Vectors are denoted by small letters, matrices are denoted by capital letters and the transpose of a matrix $A$ is denoted by $A^T$. 
For a matrix $A\in \mathbb{R}^{n\times n}$, the entry at row $i$ and column $j$ is denoted by $A_{ij}$.
By $\mathbf{1}$ we denote the all-ones vector and by $I$ we denote the identity matrix (of appropriate dimensions). 

\textbf{Graph-Theoretic Notions.}
Consider a network of $n$ ($n \geq 2$) nodes communicating only with their immediate neighbors. 
The communication topology is captured by a directed graph (digraph) defined as $\mathcal{G}_d = (\mathcal{V}, \mathcal{E})$. 
In digraph $\mathcal{G}_d$, $\mathcal{V} =  \{v_1, v_2, \dots, v_n\}$ is the set of nodes, whose cardinality is denoted as $n  = | \mathcal{V} | \geq 2 $, and $\mathcal{E} \subseteq \mathcal{V} \times \mathcal{V} - \{ (v_j, v_j) \ | \ v_j \in \mathcal{V} \}$ is the set of edges (self-edges excluded) whose cardinality is denoted as $m = | \mathcal{E} |$. 
A directed edge from node $v_i$ to node $v_j$ is denoted by $m_{ji} \triangleq (v_j, v_i) \in \mathcal{E}$, and captures the fact that node $v_j$ can receive information from node $v_i$ (but not the other way around). 
We assume that the given digraph $\mathcal{G}_d = (\mathcal{V}, \mathcal{E})$ is \textit{strongly connected}. 
This means that for each pair of nodes $v_j, v_i \in \mathcal{V}$, $v_j \neq v_i$, there exists a directed \textit{path}\footnote{A directed \textit{path} from $v_i$ to $v_j$ exists if we can find a sequence of nodes $v_i \equiv v_{l_0},v_{l_1}, \dots, v_{l_t} \equiv v_j$ such that $(v_{l_{\tau+1}},v_{l_{\tau}}) \in \mathcal{E}$ for $ \tau = 0, 1, \dots , t-1$.} from $v_i$ to $v_j$. 
Furthermore, the diameter $D$ of a digraph is the longest shortest path between any two nodes $v_j, v_i \in \mathcal{V}$ in the network. 
The subset of nodes that can directly transmit information to node $v_j$ is called the set of in-neighbors of $v_j$ and is represented by $\mathcal{N}_j^- = \{ v_i \in \mathcal{V} \; | \; (v_j,v_i)\in \mathcal{E}\}$. 
The cardinality of $\mathcal{N}_j^-$ is called the \textit{in-degree} of $v_j$ and is denoted by $\mathcal{D}_j^-$. 
The subset of nodes that can directly receive information from node $v_j$ is called the set of out-neighbors of $v_j$ and is represented by $\mathcal{N}_j^+ = \{ v_l \in \mathcal{V} \; | \; (v_l,v_j)\in \mathcal{E}\}$. 
The cardinality of $\mathcal{N}_j^+$ is called the \textit{out-degree} of $v_j$ and is denoted by $\mathcal{D}_j^+$. 

\textbf{Node Operation.}
The operation of each node $v_j \in \mathcal{V}$ respects the quantization of information flow.  
At time step $k \in \mathbb{Z}_{>0}$, each node $v_j$ maintains the mass variables $y_j[k] \in \mathbb{Z}$ and $z_j[k] \in \mathbb{Z}_{\geq 0}$, which are used to communicate with other nodes. 
The state variables $y^s_j \in \mathbb{Z}$, $z^s_j \in \mathbb{Z}_{>0}$ and $q_j^s \in \mathbb{Q}$, (where $q_j^s = \frac{y_j^s}{z_j^s}$) are used to store the received messages. 
The voting variables $m_j$ and $M_j$ are used to determine whether convergence has been achieved and the nodes can stop their operation. 
Furthermore, we assume that each node $v_j$ is aware of its out-neighbors and can directly transmit messages to each out-neighbor separately. 
In order to randomly determine which out-neighbor to transmit to, each node $v_j$ assigns a nonzero probability $b_{lj}$ to each of its outgoing edges $v_l \in \mathcal{N}^+_j$. 
For every node, this probability assignment can be captured by an $n \times n$ column stochastic matrix $\mathcal{B} = [b_{lj}]$. 
A simple choice is to set these probabilities to be equal, i.e.,
\begin{align}\label{prob_assignment}
b_{lj} = \left\{ \begin{array}{ll}
         \frac{1}{1+ \mathcal{D}_j^+}, & \mbox{if $l = j$ or $v_{l} \in \mathcal{N}_j$,}\\
         0, & \mbox{otherwise.}\end{array} \right. 
\end{align}
Each nonzero entry $b_{lj}$ of matrix $\mathcal{B}$ represents the probability of node $v_j$ transmitting towards out-neighbor $v_l \in \mathcal{N}_j$. 
%Note here that during every $k$, we have $b_{lj}[k] = b_{jl}[k]$, for every $v_l, v_i \in \mathcal{V}$. 

%\section{Preliminaries on Distributed Coordination}\label{sec:DistributedCoordination}

\subsection{Synchronous $\max$/$\min$ - Consensus}

The $\max$-consensus algorithm computes the maximum value of the network in a finite number of time steps in a distributed fashion \cite{2008:Cortes}. 
For every node $v_{j} \in \mathcal{V}$, if the updates of the node's state are synchronous, then the update rule is:
%\begin{align}
$x_j[k+1] = \max_{v_{i}\in \mathcal{N}_j^{-} \cup \{v_{j}\}}\{ x_i[k] \}$.
%\end{align}
It has been shown (see, e.g., \cite[Theorem 5.4]{2013:Giannini}) that the $\max$-consensus algorithm converges to the maximum value among all nodes in a finite number of steps $s$, where $s \leq D$.  
Similar results hold for the $\min$-consensus algorithm.

\subsection{Quantized Average Consensus}\label{Prel_Aver}

The objective of quantized average consensus problems is the development of distributed algorithms which allow nodes to process and transmit quantized information. 
During their operation, each node utilizes short communication packages and eventually obtains after a finite number of time steps a state $q^s$ in the form of a quantized fraction, which is equal to the \textit{exact} real average $q$ of the initial states. 
Note that in this paper we consider the case where quantized values are represented by integer\footnote{Following \cite{2007:Basar} we assume that the state of each node is integer valued.
This abstraction subsumes a class of quantization effects (e.g., uniform quantization).} numbers. 
%This means that each node in the network is able to obtain a state $q^s$ which is equal to the ceiling $\lceil q \rceil$ or the floor $\lfloor q \rfloor$ of the real average $q$ of the initial quantized states of the nodes, after a finite number of time steps.

%\todo{??? Does every node only transmit its mass variable to only one of its out-neighbors or itself with probability? If the information transmition is with probability, the convergence is also with probability, which contradicts with the result that algorithms converge in finite time. - THEMIS WILL FIX}

Since each node processes and transmits quantized information, we adopt the algorithm in \cite{2018:RikosHadj}. 
This algorithm is preliminary for our results in this paper and during its operation, each node is able to achieve quantized average consensus after a finite number of time steps. 
The operation of the algorithm presented in \cite{2018:RikosHadj}, assumes that each node $v_j$ in the network has an integer initial state $y_j[1] \in \mathbb{Z}$. 
Initially, each node $v_j$ assigns a nonzero probability to each outgoing edge and to a virtual self-edge as \eqref{prob_assignment}.
At each time step $k$, each node $v_j \in \mathcal{V}$ maintains its mass variables $y_j[k], z_j[k]$, and its state variables $y^s_j[k], z^s_j[k], q_j^s[k]$. 
It updates the values of the mass variables as 
\begin{subequations}\label{Prel_Aver_YZ}
\begin{align}
y_j[k+1] = y_j[k] + \sum_{v_i \in \mathcal{N}_j^-} \mathds{1}_{ji}[k] y_i[k] , \label{subeq:1a} \\
z_j[k+1] = z_j[k] + \sum_{v_i \in \mathcal{N}_j^-} \mathds{1}_{ji}[k] z_i[k] , \label{subeq:1b}
\end{align}
\end{subequations}
where $\mathds{1}_{ji}[k] = 1$ if a message is received at $v_j$ from $v_i$ at $k$ ($0$ otherwise). 
% \begin{align*}
% \mathds{1}_{ji}[k] = 
% \begin{cases}
% 1, & \text{if a message is received at $v_j$ from $v_i$ at $k$,} \\[0.1cm]
% 0, & \text{otherwise.} 
% \end{cases}
% \end{align*}
If the following condition holds: 
\begin{list3}\label{tr_cond}
\item[(C1):] $z_j[k + 1] \geq 1$ ,
\end{list3}
then, node $v_j$ updates its state variables as  
\begin{subequations}\label{State_Prel_Aver_YZ}
\begin{align}
z^s_j[k+1] &= z_j[k+1], \\
y^s_j[k+1] &= y_j[k+1], \\
q^s_j[k+1] &= \frac{y^s_j[k + 1]}{z^s_j[k + 1]} .
\end{align}
\end{subequations}
Then, it  transmits its mass variables $z_j[k+1]$, $y_j[k+1]$ to one randomly selected out-neighbor or to itself according to \eqref{prob_assignment}. 
If it transmits its mass variables, it sets them equal to zero $z_j[k+1] = 0$, $y_j[k+1] = 0$.
Finally, it receives the values $y_i[k]$ and $z_i[k]$ from its in-neighbors $v_i \in \mathcal{N}_j^+$, {it executes the operations in \eqref{subeq:1a}, \eqref{subeq:1b}}, and repeats the operation. 

\begin{defn}\label{Definition_Quant_Av}
The system achieves \textit{exact} quantized average consensus if, for every $v_j \in \mathcal{V}$, there exists $k_0 \in \mathbb{Z}_{>0}$ so that for every $v_j \in \mathcal{V}$ we have
$
y^s_j[k] = \sum_{l=1}^{n}{y_l[1]} 
$
and 
$
z^s_j[k] = n ,
$
which means that 
% \begin{equation}\label{alpha_q_no_oscill}
$
q^s_j[k] = \frac{\sum_{l=1}^{n}{y_l[1]}}{n} = q , 
$
% \end{equation}
for $k \geq k_0$, where $q$ is the desirable (real) average of the initial states. 
% \begin{equation}\label{real_av}
% $
% q = \frac{\sum_{l=1}^{n}{y_l[1]}}{n} .
% $
% \end{equation}
\end{defn}
The following result from~\cite{2018:RikosHadj} analyzes the convergence of the quantized average consensus algorithm. 

\begin{theorem}[\hspace{-0.00001cm}\cite{2018:RikosHadj}]
\label{Conver_Quant_Av}
Consider a strongly connected digraph $\mathcal{G}_d = (\mathcal{V}, \mathcal{E})$ with $n=|\mathcal{V}|$ nodes and $m=|\mathcal{E}|$ edges and $z_j[1] = 1$ and $y_j[1] \in \mathbb{Z}$ for every node $v_j \in \mathcal{V}$ at time step $k=1$. 
Suppose that each node $v_j \in \mathcal{V}$ follows the Initialization and Iteration steps as described in the algorithm in \cite{2018:RikosHadj}. 
With probability one, we can find $k_0 \in \mathbb{N}$, so that for every $k \geq k_0$ we have 
$
y^s_j[k] = \sum_{l=1}^{n}{y_l[1]} 
$
and 
$
z^s_j[k] = n , 
$
which means that 
$
q^s_j[k] = \frac{\sum_{l=1}^{n}{y_l[1]}}{n} ,
$
for every $v_j \in \mathcal{V}$ (i.e., for $k \geq k_0$ every node $v_j$ has calculated $q$ as the ratio of two integer values). 
\end{theorem}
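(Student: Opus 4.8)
\textit{Proof proposal.} The plan is to regard the mass pairs $(y_j[k],z_j[k])$ held across the network as a collection of indivisible ``tokens'' that are conserved and perform \emph{coalescing random walks}, and to argue in two stages: (i) with probability one all tokens merge, in finitely many steps, into a single token, which therefore carries the totals $\sum_{l=1}^{n} y_l[1]$ and $n$; and (ii) this single token subsequently visits every node, and each time it does so that node latches onto the value $q=\frac{\sum_{l=1}^{n}y_l[1]}{n}$ and never changes it afterwards.

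First I would record the conservation law: by induction on $k$, using the mass updates~\eqref{subeq:1a},~\eqref{subeq:1b} together with the transmission rule (a firing node sends its entire mass $(y_j,z_j)$ to exactly one recipient, possibly itself, and then resets its own mass to $(0,0)$), one gets $\sum_{v_j\in\mathcal{V}} y_j[k]=\sum_{l=1}^{n}y_l[1]$ and $\sum_{v_j\in\mathcal{V}} z_j[k]=\sum_{l=1}^{n}z_l[1]=n$ for all $k\ge 1$; mass is only relocated, never created or destroyed. The same bookkeeping shows that $y_j[k]$ and $z_j[k]$ always move together, so that $z_j[k]\ge 1$ exactly when $v_j$ currently holds at least one token (each token having $z$-component at least $1$), and that the number of nodes holding positive mass is nonincreasing in $k$, since each such node ships all of its mass to a single recipient while several nodes may ship to a common recipient (tokens merge but never split).

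The crux is to show that this ``support size'' reaches $1$ in finitely many steps almost surely. For this I would establish a uniform estimate: whenever at least two nodes hold mass, the support size strictly decreases within at most $D$ steps with probability at least a constant $p>0$ depending only on $\mathcal{G}_d$ and the probabilities $b_{lj}$. Fix two mass-holding nodes $u_1,u_2$; by strong connectivity pick $w$ with $d_1=\mathrm{dist}(u_1,w)\le D$ and $d_2=\mathrm{dist}(u_2,w)\le D$, say $d_1\le d_2$. Consider the event that, for $d_2-d_1$ steps, the node currently holding the descendant of $u_1$'s token chooses its self-loop, and then for $d_1$ steps it advances along a fixed shortest $u_1$-to-$w$ path, while the node holding the descendant of $u_2$'s token advances along a fixed shortest $u_2$-to-$w$ path from the outset; then both descendants are at $w$ at step $d_2$ and hence have merged (possibly earlier, and possibly after absorbing further tokens en route, which only helps). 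Since every self-loop and every edge carries probability at least $b_{\min}:=\min\{b_{lj}:b_{lj}\neq 0\}>0$ by~\eqref{prob_assignment}, and the choices involved are made by distinct nodes at distinct time steps and hence are independent, this event has probability at least $p:=b_{\min}^{2D}$. Applying this estimate over successive length-$D$ blocks, and noting that at most $n-1$ merges are needed, a standard geometric/Borel--Cantelli argument (e.g.\ the expected number of blocks is at most $(n-1)/p<\infty$) yields that almost surely a single token remains after some finite random time $k_1$; by conservation it carries $(\sum_{l=1}^{n}y_l[1],\,n)$, and from then on its location evolves as the Markov chain on $\mathcal{V}$ with transition probabilities $b_{lj}$, which is irreducible since $\mathcal{G}_d$ is strongly connected and has self-loops.

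Finally, an irreducible finite Markov chain visits every state infinitely often almost surely, so there is almost surely a finite time $k_0\ge k_1$ by which the single token has visited every node at least once after $k_1$. While the token is at $v_j$ we have $z_j[k]=n\ge 1$, so (C1) holds and $v_j$ sets $z_j^s[k]=n$, $y_j^s[k]=\sum_{l=1}^{n}y_l[1]$, $q_j^s[k]=q$; while the token is elsewhere we have $y_j[k]=z_j[k]=0$, so (C1) fails and $v_j$ leaves its state variables untouched. Hence each node's state variables are pinned to the correct values from its first post-$k_1$ visit onward, so for every $k\ge k_0$ and every $v_j\in\mathcal{V}$ we have $q_j^s[k]=q$, which is the claim. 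The main obstacle is the coalescence step: carefully formalizing the token picture (non-splitting, monotone support, and the steering-with-self-loops construction that synchronizes arrival times) and converting the per-block lower bound into an almost-sure finite-time statement; the conservation law, the latching argument, and the recurrence of an irreducible finite chain are then routine.
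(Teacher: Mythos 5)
The paper does not itself prove Theorem~\ref{Conver_Quant_Av}; it imports the result from \cite{2018:RikosHadj} and only reproduces the statement. That said, your argument is correct and is essentially the same coalescing-token analysis the paper deploys in Appendix~\ref{appendix:A} for Theorem~\ref{thm:degree_estimation} (mass conservation, non-splitting tokens whose merges occur with uniformly positive probability over windows of length $D$, cf.\ Lemmas~\ref{Lemma1_prob} and~\ref{tokenmeet}, followed by the surviving token visiting every node and latching the state variables), with your geometric/Borel--Cantelli step simply upgrading the paper's finite-horizon probability bounds to the almost-sure claim.
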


% ===============================================
%
%
% PROBLEM
%
%
% ===============================================
\section{Problem Formulation}\label{sec:probForm}

Consider a network modelled as a directed graph $\mathcal{G}_d = (\mathcal{V}, \mathcal{E})$. 
In this paper, we develop a distributed algorithm that allows nodes to address the problems $\textbf{P1}$ and $\textbf{P2}$ presented below, while processing and transmitting \textit{quantized} information via available communication links. 

$\textbf{P1.}$ After a finite number of time steps, each node $v_j$ obtains a fraction $q_j^s$, which is equal to  
\begin{equation}\label{average_degree_def}
    q_j^s = \frac{\sum_{l=1}^{n}{\mathcal{D}_l^+}}{n} \ ,
\end{equation}
which means that $q_j^s$ is equal to the average degree in the network. Each node $v_j$  processes and transmits quantized values, and ceases transmissions once \eqref{average_degree_def} holds for every node. 

$\textbf{P2.}$ After a finite number of time steps, each node $v_j$ obtains a value $z_j^s$, which is equal to: 
\begin{equation}\label{number_nodes_network}
    z_j^s = n \ , 
\end{equation}
which means that $z_j^s$ is equal to the number of nodes in the network. Each node $v_j$ processes and transmits quantized values, and 
ceases transmissions once \eqref{number_nodes_network} holds for every node.

% ===============================================
%
%
% ALGORITHM
%
%
% ===============================================
\section{Distributed Average Degree Computation}
\label{sec:distr_alg_averagedegree}

In this section we present a distributed algorithm which solves problem $\textbf{P1}$. 
Our algorithm is detailed below as Algorithm~\ref{algorithm_averagedegree}. 
%Algorithm~\ref{algorithm_averagedegree} operates solely with quantized values (i.e., each node processes and transmits quantized values) and converges in finite time \AR{with high probability}. 
For solving the problem in a distributed way we make the following assumption. 

\begin{assum}\label{Diam_known}
An upper bound $D'$ of the diameter of the network $D$ is known to all nodes $v_j \in \mathcal{V}$.
\end{assum}
Assumption~\ref{Diam_known} is necessary for coordinating $\min$- and $\max$-consensus algorithms, as it will be described later. 

We now describe the main operations of Algorithm~\ref{algorithm_averagedegree}. 
% The initialization involves the following steps: 

\textbf{Initialization.} 
Each node $v_j \in \mathcal{V}$ assigns a nonzero probability to each outgoing edge and a virtual self edge, so that the sum of the nonzero probabilities is equal to one. 
It sets its mass variable $y_j[1]$ to be equal to the node's out-degree and its mass variable $z_j[1]$ to be equal to one. 
Also it sets its initial state variables $y^s_j[1]$, $z^s_j[1]$ to be equal to the initial mass variables $y_j[1]$, $z_j[1]$, respectively, and the state variable $q^s_j[1]$ to be equal to the fraction $y^s_j[1] / z^s_j[1]$. 
Then, it transmits its mass variables to a randomly chosen out-neighbor and sets them equal to zero. 

% The iteration involves the following steps: 

\textbf{Iteration - Step~$1$. Calculating the Average Network Degree:} 
Every node $v_j$ receives the transmitted mass variables of its in-neighbors, and sums them with the stored mass variables. 
Then, if its mass variable $z_j[k+1]$ is nonzero, (i) it updates its state variables to be equal to the mass variables, and (ii) it chooses randomly an out-neighbor (or itself) and transmits the mass variables $y_j[k+1]$ and $z_j[k+1]$. 
Eventually, after a finite number of time steps, the ratio of state variables $y^s_j[k+1] / z^s_j[k+1]$ of each node $v_j$ is equal to the average degree in the network. 

\textbf{Iteration - Step~$2$. Distributed Stopping:}
Every $k = t D' +1$ time steps, where $t \in \mathbb{N}$, each node $v_j$ sets its voting variables $m_j$ and $M_j$ to be equal to the fraction $y^s_j[k] / z^s_j[k]$ of the state variables. 
It broadcasts its voting variables to its out-neighbors and receives the corresponding $m_i$ and $M_i$ from its in-neighbors $v_i\in\mathcal{N}_j^{-}$. 
It stores the $\min$ and $\max$ among all received and its own voting values to the variables $m_j$ and $M_j$, respectively. Then, the $\min-$ and $\max-$consensus algorithms are performed for $D'-1$ steps. 
When $k = (t+1) D'$ time steps, each node $v_j$ checks whether $M_j$, $m_j$ have equal values; if this holds, then every node terminates the operation of the algorithm. 
If not, the process continues, with step $k = (t+1) D' +1$, in which each node  $v_j$ sets its voting variables $m_j$ and $M_j$ to be equal to the fraction $y^s_j[k] / z^s_j[k]$ of the state variables, and the $\min-$ and $\max-$consensus algorithms are restarted. 
Note that the $m_j$ and $M_j$ are fractions of integers. 
Therefore, for both the $\min-$ and $\max-$consensus, each node $v_j$ at time step $k$ sends a pair of integer values $(y^s_j[k], z^s_j[k] )$ whose ratio $y^s_j[k]/z^s_j[k]$ is used during the consensus operation. 

% \todo{$\text{flag}_j$ is never used.. replace with true + explain how max,min are performed with pairs of integers}

\begin{varalgorithm}{1}
\caption{Average Degree Computation Algorithm}
\noindent \textbf{Input:} A strongly connected digraph $\mathcal{G}_d = (\mathcal{V}, \mathcal{E})$ with $n=|\mathcal{V}|$ nodes and $m=|\mathcal{E}|$ edges. 
Each node $v_j\in \mathcal{V}$ has knowledge of an upper bound $D'$ of the network diameter. \\
\textbf{Initialization:} Each node $v_j \in \mathcal{V}$: 
\begin{list4}
\item[$1)$] assigns a nonzero probability $b_{lj}$ to each of its outgoing edges $m_{lj}$, where $v_l \in \mathcal{N}^+_j \cup \{v_j\}$, as follows
\begin{align*}
b_{lj} = \left\{ \begin{array}{ll}
         \frac{1}{1 + \mathcal{D}_j^+}, & \mbox{if $l = j$ or $v_{l} \in \mathcal{N}_j^+$,}\\
         0, & \mbox{if $l \neq j$ and $v_{l} \notin \mathcal{N}_j^+$,}\end{array} \right. 
\end{align*} 
\item[$2)$] sets $y_j[1] := \mathcal{D}_j^+$, $z_j[1] = 1$. 
% and $\text{flag}_j = 0$. 
\item[$3)$] sets $y^s_j[1] := y_j[1]$, $z^s_j[1] = 1$, $q^s_j[1] := y^s_j[1] / z^s_j[1]$. 
\item[$4)$] chooses $v_l \in \mathcal{N}^+_j \cup \{v_j\}$ randomly according to $b_{lj}$, and transmits $y_j[1]$ and $z_j[1]$ towards $v_l$. 
\end{list4} 
\textbf{Iteration:} For $k=1,2,\dots$, each node $v_j \in \mathcal{V}$, does the following: 
\begin{list4} 
\item \textbf{while} 
% $\text{flag}_j = 0$ 
\textbf{\textit{true}}
\textbf{then} 
\begin{list4a}
\item[$1)$] \textbf{if} $k \mod D' = 1$ \textbf{then} sets $M_j = m_j = y^s_j[k] / z^s_j[k]$; 
\item[$2)$] broadcasts $M_j$, $m_j$ to every $v_{l} \in \mathcal{N}_j^+$; 
\item[$3)$] receives $M_i$, $m_i$ from every $v_{i} \in \mathcal{N}_j^-$; 
\item[$4)$] sets $M_j =\displaystyle \max_{v_{i} \in \mathcal{N}_j^-\cup \{ v_j \}} M_i$, $m_j =\displaystyle \min_{v_{i} \in \mathcal{N}_j^-\cup \{ v_j \}} m_i$; 
\item[$5)$] receives $y_i[k]$ and $z_i[k]$ from $v_i \in \mathcal{N}_j^-$ and updates according to \eqref{subeq:1a}-\eqref{subeq:1b};
%\begin{equation}\label{no_del_eq_y_no_oscil}
%y_j[k+1] = y_j[k] + \sum_{v_i \in \mathcal{N}_j^-}  w_{ji}[k] \ y_i[k] ,
%\end{equation} 
%\begin{equation}\label{no_del_eq_z_no_oscil}
%z_j[k+1] = z_j[k] + \sum_{v_i \in \mathcal{N}_j^-} w_{ji}[k] \ z_i[k] ,
%\end{equation}
%where $w_{ji}[k] = 1$ if node $v_j$ receives $c^{y}_{ji}[k]$, $c^{z}_{ji}[k]$ from $v_i \in \mathcal{N}_j^-$ at iteration $k$ (otherwise $w_{ji}[k] = 0$); 
\item[$6)$] \textbf{if} $z_j[k+1] > 1$, \textbf{then}
\begin{list4a}
\item[$6.1)$] sets $z^s_j[k+1] = z_j[k+1]$, $y^s_j[k+1] = y_j[k+1]$, 
$
q^s_j[k] = \frac{y^s_j[k]}{z^s_j[k]} \ ;
$
\item[$6.2)$] chooses $v_l \in \mathcal{N}^+_j \cup \{v_j\}$ randomly according to $b_{lj}$, and transmits $y_j[k+1]$ and $z_j[k+1]$ towards $v_l$. 
\end{list4a}
\item[$7)$] \textbf{if} $k \mod D' = 0$ \textbf{then}, \textbf{if} $M_j = m_j$ \textbf{then} stops operation. 
\end{list4a}
\end{list4}
\textbf{Output:} \eqref{average_degree_def} holds for every $v_j \in \mathcal{V}$. 
\label{algorithm_averagedegree}
\end{varalgorithm}

We now analyze the convergence of Algorithm~\ref{algorithm_averagedegree}. 
% Specifically, we show that during the operation of Algorithm~\ref{algorithm_averagedegree}, each node $v_j$ addresses problem $\textbf{P1.}$ presented in Section~\ref{sec:probForm} in a finite number of time steps with high probability. 

%\todo{??? For both theorems: The proof of the paper is not rigorous enough, and the key content is vague. The definition of convergence with probability needs to be clearly given, and then the corresponding conclusions should be strictly proved according to the definition. - THEMIS WILL FIX}

\begin{theorem}\label{thm:degree_estimation}
Consider a strongly connected digraph $\mathcal{G}_d = (\mathcal{V}, \mathcal{E})$ with $n=|\mathcal{V}|$ nodes and $m=|\mathcal{E}|$ edges. 
At time step $k=1$, each node $v_j$ follows the Initialization and Iteration steps as described in Algorithm~\ref{algorithm_averagedegree}. 
For any probability $p_0$, where $0 < p_0 < 1$, there exists $k_0 \in \mathbb{N}$, so that with probability at least $p_0$, every node $v_j$ addresses problem \textbf{P1} in Section~\ref{sec:probForm}. 
\end{theorem}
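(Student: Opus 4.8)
The plan is to decompose the argument into three parts: correctness of the limiting value, finite-time convergence of the underlying quantized averaging process, and correctness of the distributed stopping mechanism. First I would observe that the Initialization step of Algorithm~\ref{algorithm_averagedegree} sets $y_j[1] = \mathcal{D}_j^+$ and $z_j[1] = 1$ for every $v_j \in \mathcal{V}$, and that Step~$5$ and Step~$6$ of the Iteration are exactly the mass update \eqref{subeq:1a}--\eqref{subeq:1b} and state update \eqref{State_Prel_Aver_YZ} of the quantized average consensus algorithm from \cite{2018:RikosHadj}. Hence Theorem~\ref{Conver_Quant_Av} applies verbatim: with probability one there exists $k_1 \in \mathbb{N}$ such that for all $k \geq k_1$ and all $v_j \in \mathcal{V}$ we have $y^s_j[k] = \sum_{l=1}^n \mathcal{D}_l^+$ and $z^s_j[k] = n$, so that $q^s_j[k] = \frac{\sum_{l=1}^n \mathcal{D}_l^+}{n}$, which is precisely \eqref{average_degree_def}. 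This establishes that the target fraction is reached in finite time almost surely; the remaining work is to convert ``almost surely finite'' into the ``with probability at least $p_0$'' quantitative statement and to show the stopping rule fires exactly when it should, neither prematurely nor too late.

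Next I would handle the distributed stopping logic. The key structural fact is that the voting variables $m_j, M_j$ are, every $D'$ steps, reinitialized to $q^s_j[k]$ and then propagated by $D'-1$ rounds of $\min$- and $\max$-consensus; since $D' \geq D$ by Assumption~\ref{Diam_known}, the $\min$/$\max$-consensus results quoted in Section~II (convergence in at most $D$ steps) guarantee that after the block completes, every node holds $m_j = \min_{v_i \in \mathcal{V}} q^s_i[tD'+1]$ and $M_j = \max_{v_i \in \mathcal{V}} q^s_i[tD'+1]$. I would then argue two directions. For soundness (no false termination): if $M_j = m_j$ at some $k = (t+1)D'$, all nodes had the \emph{same} value of $q^s$ at the start of that block; I would need a small lemma that once all state variables agree on the average fraction, the value $q^s_j$ never changes thereafter — this follows because the ``total mass'' $\sum_j y_j[k] + \sum_j (\text{in-transit } y)$ is conserved and $z$-mass likewise, so any node that updates its state (when $z_j[k+1]>1$) can only re-derive the same ratio; moreover one must check that it is impossible for all nodes to \emph{coincidentally} share a common $q^s$ value other than the true average while mass is still being redistributed — here the integer structure and conservation of $\sum z_j = n$ are what rule this out. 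For completeness (eventual termination): once $k \geq k_1$ from the first paragraph, in the next block boundary $k = tD'+1$ with $tD'+1 \geq k_1$ every node sets $m_j = M_j = q$, the consensus phase preserves this, and termination is detected at $k=(t+1)D'$.

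To get the quantitative bound, I would let $T$ be the (random, a.s.\ finite) first time $k_1$ rounded up to the next multiple of $D'$ plus $D'$ — i.e.\ the first block-boundary check occurring after quantized average consensus has been reached. By Theorem~\ref{Conver_Quant_Av}, $\Pr[T < \infty] = 1$, hence $\lim_{N\to\infty}\Pr[T \leq N] = 1$, so for any $p_0 \in (0,1)$ there is $N_0$ with $\Pr[T \leq N_0] \geq p_0$; taking $k_0 = N_0$ gives the statement, since on the event $\{T \leq k_0\}$ the soundness and completeness arguments above show every node has terminated with $q^s_j = \frac{\sum_l \mathcal{D}_l^+}{n}$, i.e.\ problem \textbf{P1} is solved. (If a concrete bound is desired one can instead invoke the geometric-type tail bound on convergence of the \cite{2018:RikosHadj} scheme to make $N_0$ explicit in terms of $D'$ and the mixing parameters, but the soft argument suffices for the stated form.)

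The main obstacle I anticipate is the soundness half of the stopping argument: proving rigorously that the voting test $M_j = m_j$ cannot be satisfied before true convergence. One must rule out a ``spurious consensus'' in which all nodes happen to hold equal $q^s_j$ values at a block boundary even though the token masses have not yet settled. The cleanest route is a conservation-of-mass invariant combined with the observation that at a block-boundary time the quantity $q^s_j[k]$ reported by node $v_j$ equals $y^s_j/z^s_j$ where $(y^s_j,z^s_j)$ is the last successfully updated mass pair; if all these ratios are equal to some $c$, then summing and using $\sum_j z_j[1] = n$ together with mass conservation forces $c$ to be the true average and forces no further state changes — but making this airtight requires carefully tracking the in-transit mass and the ``last update'' bookkeeping, which is the technical heart of the proof.
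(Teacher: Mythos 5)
Your proposal reaches the stated conclusion but by a genuinely different route than the paper. The paper does not invoke Theorem~\ref{Conver_Quant_Av} as a black box; instead it re-derives convergence quantitatively by viewing the algorithm as $n$ coalescing tokens performing random walks, and uses Lemma~\ref{Lemma1_prob} (probability a token reaches a given node within $D$ steps is at least $(1+\mathcal{D}^+_{\max})^{-D}$) together with a two-token meeting lemma (Lemma~\ref{tokenmeet}) to bound, for a target confidence $\sqrt{p_0}$ each, (I) the time for all $n$ tokens to merge and (II) the time for the final token to visit every node, then adds $D'$ steps for the stopping check. This buys an explicit expression $k_0 \geq (n-1)\tau' D + (n-1)\tau'' D + D'$, which the paper advertises as a contribution; your soft argument ($\Pr[T<\infty]=1$ implies $\Pr[T\le N]\to 1$, pick $N_0$ with $\Pr[T\le N_0]\ge p_0$) proves only existence of $k_0$, which is all the theorem literally asserts, and is cleaner for that purpose.

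Two caveats. First, Theorem~\ref{Conver_Quant_Av} does not apply quite ``verbatim'': the preliminary algorithm updates and transmits under condition (C1) $z_j[k+1]\geq 1$, whereas Algorithm~\ref{algorithm_averagedegree} uses $z_j[k+1]>1$, so singleton tokens park at nodes until another token arrives; the coalescence argument still goes through (and is what the paper's Part~I actually proves), but you should not claim the cited theorem covers the modified dynamics without comment. Second, the soundness half of your stopping analysis --- ruling out a ``spurious'' block boundary at which all $q^s_j$ coincide at a value other than the true average before the tokens have fully merged --- is a real issue that you correctly identify as the technical heart and leave unfinished; note, however, that the paper's own proof does not address it at all (Part~III only argues that termination is detected \emph{after} convergence, never that it cannot fire before). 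The argument you sketch via mass conservation needs the additional observation that any node whose state would have to be overwritten to the spurious value is visited by a token with $z>1$, which forces a merge with whatever mass is parked there and hence changes the ratio; making that precise would strengthen the paper's proof, not just yours.
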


\begin{proof}
See Appendix~\ref{appendix:A}.
\end{proof}

% ===============================================
%
%
% ALGORITHM
%
%
% ===============================================
\section{Distributed Network Size Computation}
\label{sec:distr_alg_networksize}

In this section we present a distributed algorithm which solves problem $\textbf{P2}$. 
Our algorithm is detailed below as Algorithm~\ref{algorithm_networksize}. 
%Similarly to Algorithm~\ref{algorithm_averagedegree}, Algorithm~\ref{algorithm_networksize} also operates solely with quantized values (i.e., each node processes and transmits quantized values) and converges in finite time \AR{almost surely (i.e., with probability one)}. 
Note here that Assumption~\ref{Diam_known} also holds during the operation of the proposed algorithm and we  additionally  make the following assumption. 

\begin{assum}\label{leader_vote}
All nodes $v_j \in \mathcal{V}$ have knowledge of a constant $U_v \in \mathbb{N}$. 
\end{assum}

Assumption~\ref{leader_vote} is necessary for executing the $\max$-consensus algorithm in order to elect a leader node with high probability. This can be a preset value for running the protocol, irrespective of the network.

We now describe the main operations of Algorithm~\ref{algorithm_networksize}. 
% The initialization involves the following steps: 

\textbf{Initialization - Step~$1$. Probability Assignment:} 
Each node $v_j \in \mathcal{V}$ assigns a nonzero probability to each outgoing edge and a virtual self edge, so that the sum of nonzero probabilities is equal to one. 

\textbf{Initialization - Step~$2$. Leader Election:} 
Each node executes Algorithm~\ref{algorithm_LeaderElection}. 
During its operation, each node in the network executes a max-consensus for $U_v D'$ time steps (i.e., executes a max-consensus algorithm $U_v$ times). 
More specifically, each node randomly picks a quantized value and executes the first max-consensus for $D'$ time steps. 
Once the first max-consensus converges, we have that (i) the node (or nodes) that picked the maximum value pick again randomly a quantized value, and (ii) the node (or nodes) that did not pick the maximum value choose a value equal to $-1$. 
Then, the max-consensus is executed again for $D'$ time steps. 
This process is repeated $U_v$ times (i.e., for $U_v D'$  time steps). 
After the execution of Algorithm~\ref{algorithm_LeaderElection}, we have that one node $v_j \in \mathcal{V}$ is the leader (i.e., $\text{flag}^{\text{ld}}_j = 1$) and every node $v_i \in \mathcal{V} \setminus \{ v_j \}$ is a follower (i.e., $\text{flag}^{\text{ld}}_i = 0$), with high probability. 

\textbf{Initialization - Step~$3$. Initialization of Mass and State Variables:} 
Each node initializes its mass variables and its state variables according to the result of Algorithm~\ref{algorithm_LeaderElection} (i.e., the leader node initializes its mass variables different than the follower nodes). 
Specifically, the leader node initializes both $y$ and $z$ to be equal to $1$. 
Every follower node initializes $y$ to be equal to $0$ and $z$ to be equal to $1$. 
Then, every node sets its state variables to be equal to the mass variables.  

% The iteration involves the following steps: 

\textbf{Iteration - Step~$1$. Calculating the Network Size:} 
Every node $v_j$ receives the transmitted mass variables of its in-neighbors, and sums them with the stored mass variables. 
Then, if its mass variable $z_j[k+1]$ is nonzero, (i) it updates its state variables to be equal to the mass variables, and (ii) it chooses randomly an out-neighbor (or itself) and transmits the mass variables $y_j[k+1]$ and $z_j[k+1]$. 
Eventually, after a finite number of time steps, the state variable $z^s_j[k+1]$ of each node $v_j$ is equal to the number of nodes in the network. 

\textbf{Iteration - Step~$2$. Distributed Stopping:} 
The operation of this step is identical to ``Iteration - Step~$1$. Distributed Stopping'' of Algorithm~\ref{algorithm_averagedegree}. 
It is omitted due to space considerations. 
% For determining whether convergence has been achieved, every $k = 1 + \sigma D$ time steps, where $\sigma \in \mathbb{N}$, it sets its voting variables $m_j$ and $M_j$ to be equal to $y^s_j[k]$. 
% It broadcasts its voting variables to its out-neighbors and receives the broadcasted $m_i$ and $M_i$ from its in-neighbors. 
% It stores the min and max received values to the variables $m_j$ and $M_j$, respectively. 
% Every $k = \sigma D$ time steps, where $\sigma \in \mathbb{N}$, every node checks whether $M_j$, $m_j$ have equal values; if this holds, then every node stops the operation of the algorithm. 

\begin{varalgorithm}{2}
\caption{Distributed Network Size Computation Algorithm}
\noindent \textbf{Input:} A strongly connected digraph $\mathcal{G}_d = (\mathcal{V}, \mathcal{E})$ with $n=|\mathcal{V}|$ nodes and $m=|\mathcal{E}|$ edges. 
Each node $v_j\in \mathcal{V}$ has knowledge of an upper bound $D'$ of the network diameter. \\
\textbf{Initialization:} Each node $v_j \in \mathcal{V}$ does the following: 
\begin{list4}
\item[$1)$] Assigns a nonzero probability $b_{lj}$ to each of its outgoing edges $m_{lj}$, where $v_l \in \mathcal{N}^+_j \cup \{v_j\}$, as follows
\begin{align*}
b_{lj} = \left\{ \begin{array}{ll}
         \frac{1}{1 + \mathcal{D}_j^+}, & \mbox{if $l = j$ or $v_{l} \in \mathcal{N}_j^+$,}\\
         0, & \mbox{if $l \neq j$ and $v_{l} \notin \mathcal{N}_j^+$.}\end{array} \right. 
\end{align*} 
\item[$2)$] Calls Algorithm~\ref{algorithm_LeaderElection}; 
\begin{list4a}
\item[$2.1)$] \textbf{if} $\text{flag}^{\text{ld}}_j = 1$, sets $y_j[1] := 1$, $z_j[1] = 1$; 
\item[$2.2)$] \textbf{if} $\text{flag}^{\text{ld}}_j = 0$, sets $y_j[1] := 0$, $z_j[1] = 1$; 
\end{list4a}
% \item[$3)$] Sets $\text{flag}_j = 0$. 
\item[$3)$] Sets $y^s_j[1] := y_j[1]$, $z^s_j[1] = 1$. 
\end{list4} 
\textbf{Iteration:} For $k=1,2, \dots$, each node $v_j \in \mathcal{V}$, does the following: 
\begin{list4} 
\item \textbf{while} 
% $\text{flag}_j = 0$ 
\textbf{\textit{true}}
\textbf{then} 
\begin{list4a}
\item[$1)$] \textbf{if} $k \mod D' = 1$ \textbf{then} sets $M_j = m_j = y^s_j[k]$; 
\item[$2)$] steps $2 - 7$ are same as Algorithm~\ref{algorithm_averagedegree}; 
% \item[$2)$] broadcasts $M_j$, $m_j$ to every $v_{l} \in \mathcal{N}_j^+$; 
% \item[$3)$] receives $M_i$, $m_i$ from every $v_{i} \in \mathcal{N}_j^-$; 
% \item[$4)$] sets $M_j = \max_{v_{i} \in \mathcal{N}_j^-\cup \{ v_j \}} \ M_i$, $m_j = \min_{v_{i} \in \mathcal{N}_j^-\cup \{ v_j \}} \ m_i$; 
% \item[$5)$] receives $y_i[k]$ and $z_i[k]$ from $v_i \in \mathcal{N}_j^-$ and sets 
% \begin{equation}\label{no_del_eq_y_no_oscil}
% y_j[k+1] = y_j[k] + \sum_{v_i \in \mathcal{N}_j^-}  w_{ji}[k] \ y_i[k] ,
% \end{equation} 
% \begin{equation}\label{no_del_eq_z_no_oscil}
% z_j[k+1] = z_j[k] + \sum_{v_i \in \mathcal{N}_j^-} w_{ji}[k] \ z_i[k] ,
% \end{equation}
% where $w_{ji}[k] = 1$ if node $v_j$ receives $c^{y}_{ji}[k]$, $c^{z}_{ji}[k]$ from $v_i \in \mathcal{N}_j^-$ at iteration $k$ (otherwise $w_{ji}[k] = 0$); 
% \item[$6)$] \textbf{if} $z_j[k+1] > 1$, \textbf{then}
% \begin{list4a}
% \item[$6.1)$] sets $z^s_j[k+1] = z_j[k+1]$, $y^s_j[k+1] = y_j[k+1]$; 
% \item[$6.2)$] chooses $v_l \in \mathcal{N}^+_j \cup \{v_j\}$ randomly according to $b_{lj}$, and transmits $y_j[k+1]$ and $z_j[k+1]$ towards $v_l$. 
% \end{list4a}
% \item[$7)$] \textbf{if} $k \mod D = 0$ \textbf{then}, \textbf{if} $M_j = m_j$ \textbf{then} stops operation. 
\end{list4a}
\end{list4}
\textbf{Output:} \eqref{number_nodes_network} holds for every $v_j \in \mathcal{V}$. 
\label{algorithm_networksize}
\end{varalgorithm}

% \todo{
% $M_j'$ - leader voting variable 
% $\eta_j'$ - initial value in the voting leader 
% $Up_j$ - is upper bound in the set to which the values are chosen to be the leader. 
% $\text{flag}^{\text{ld}}_j = 1$, initiates at 1, and becomes 0 once the node understands is not leader. 
% if $\text{flag}^{\text{ld}}_j$ becomes zero, node vj removes one from y value. 
% }

\begin{varalgorithm}{2A}
\caption{Leader Election with Quantized Information}
\noindent \textbf{Input:} A strongly connected digraph $\mathcal{G}_d = (\mathcal{V}, \mathcal{E})$ with $n=|\mathcal{V}|$ nodes and $m=|\mathcal{E}|$ edges. 
Each node $v_j\in \mathcal{V}$ has knowledge of an upper bound $D'$ of the network diameter. \\
\textbf{Initialization:} Each node $v_j \in \mathcal{V}$ sets $\text{flag}^{\text{ld}}_j = 1$. \\ 
% \begin{list4} 
% \item[$1)$] Sets $\text{flag}^{\text{ld}}_j = 1$. 
% \end{list4} 
\textbf{Iteration:} For $k = 1,2, \dots, U_v D'$, each node $v_j \in \mathcal{V}$, does the following: 
\begin{list4} 
\item[$1)$] \textbf{if} $k \mod D' = 1$ \textbf{then} 
\begin{list4a}
\item[$1.1)$] if $\text{flag}^{\text{ld}}_j = 1$ \textbf{then} chooses randomly $\eta_j' \in \{ 0, 1, ..., \ Up_j \}$, and sets $M_j' = \eta_j'$; 
\item[$1.2)$] if $\text{flag}^{\text{ld}}_j = 0$ \textbf{then} sets $\eta_j' = -1$ and $M_j' = \eta_j'$; 
\end{list4a} 
\item[$2)$] broadcasts $M_j'$ to every $v_{l} \in \mathcal{N}_j^+$; 
\item[$3)$] receives $M_i'$ from every $v_{i} \in \mathcal{N}_j^-$; 
\item[$4)$] sets $M_j' = \max_{v_{i} \in \mathcal{N}_j^-\cup \{ v_j \}} \ M_i'$; 
\item[$5)$] \textbf{if} $k \mod D' = 0$ \textbf{then} \textbf{if} $M_j' \neq \eta_j'$ \textbf{then} sets $\text{flag}^{\text{ld}}_j = 0$; 
\end{list4}
\textbf{Output:} $\text{flag}^{\text{ld}}_j$ for every $v_j \in \mathcal{V}$. 
\label{algorithm_LeaderElection}
\end{varalgorithm}

%\textbf{Parallel Execution of Algorithm~\ref{algorithm_LeaderElection}, and Algorithm~\ref{algorithm_networksize}.}

We now analyze the convergence of Algorithm~\ref{algorithm_networksize}. 

\begin{theorem}\label{thm:size_estimation}
Consider a strongly connected digraph $\mathcal{G}_d = (\mathcal{V}, \mathcal{E})$ with $n=|\mathcal{V}|$ nodes and $m=|\mathcal{E}|$ edges. 
At time step $k=1$, each node $v_j$ follows the Initialization and Iteration steps as described in Algorithm~\ref{algorithm_networksize}. 
For any probability $p_0$, where $0 < p_0 < 1$, there exists $k_0 \in \mathbb{N}$, so that with probability at least $p_0$, every node $v_j$ addresses problem \textbf{P2} in Section~\ref{sec:probForm}. 
\end{theorem}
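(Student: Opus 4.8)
The plan is to split the run of Algorithm~\ref{algorithm_networksize} into the leader-election phase (Algorithm~\ref{algorithm_LeaderElection}, which lasts the deterministic number $U_v D'$ of steps) and the ensuing quantized-averaging-with-distributed-stopping phase, to bound the probability that each phase behaves as intended, and then to invoke Theorem~\ref{Conver_Quant_Av} to pin down the limiting state. The goal \eqref{number_nodes_network} is $z^s_j=n$, which — as used below — is precisely the component of the limit in Theorem~\ref{Conver_Quant_Av} that depends only on the $z$-initialization.

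\textbf{Leader election.} Let $N_t$ be the number of nodes with $\text{flag}^{\text{ld}}_j=1$ after the $t$-th $\max$-consensus round of Algorithm~\ref{algorithm_LeaderElection}, $t=0,\dots,U_v$, so $N_0=n$. Since $\max$-consensus is exact within $D\le D'$ steps, after each round the surviving candidates are exactly those that drew the largest value among the current candidates; hence $(N_t)$ is nonincreasing, $N_t\ge 1$ for all $t$ (a candidate achieving the round maximum always survives, while followers draw $-1$), and $N_t=1$ is absorbing. Whenever $N_t\ge 2$, the event ``one distinguished candidate draws $1$ and every other candidate draws $0$'' forces $N_{t+1}=1$ and has probability at least $\delta:=\prod_{v_i\in\mathcal V}(Up_i+1)^{-1}>0$, a bound independent of the current candidate set. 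Hence $\Pr[N_{U_v}\ge 2]\le(1-\delta)^{U_v}$: a unique leader is elected with probability at least $1-(1-\delta)^{U_v}$, which can be pushed as close to $1$ as wanted by enlarging $U_v$.

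\textbf{Averaging and truncation.} Condition on a unique leader being elected. Then Step~$3$ of the Initialization gives $\sum_{l=1}^{n}y_l[1]=1$ and $z_l[1]=1$ for every $v_l$, and the Iteration of Algorithm~\ref{algorithm_networksize} is exactly the quantized-averaging recursion of \cite{2018:RikosHadj}. By Theorem~\ref{Conver_Quant_Av}, with probability one there is a finite random step $T$ such that $y^s_j[k]=1$ and $z^s_j[k]=n$ for all $v_j$ and all $k\ge T$; moreover the equality $z^s_j[k]=n$ uses only $z_l[1]=1$ for every $l$, so it holds eventually with probability one irrespective of the leader-election outcome. Given $p_0$, pick $k_0=(U_v+t_0)D'$ with $t_0$ large enough that $\Pr[T\le t_0 D']$ — multiplied, if one also wants $y^s_j=1$, by the leader-election bound above — is at least $p_0$; such $t_0$ exists because $T$ is finite with probability one. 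On the corresponding event, $z^s_j[k]=n$ for every $v_j$ and every $k\ge t_0 D'$, i.e.\ \eqref{number_nodes_network} holds.

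Finally, as in the proof of Theorem~\ref{thm:degree_estimation}, one checks that the distributed-stopping rule (Step~$2$ of the Iteration, inherited from Algorithm~\ref{algorithm_averagedegree}) halts every node and only after \eqref{number_nodes_network} holds: the $\min$/$\max$-consensus run for $D'\ge D$ steps returns the exact network-wide minimum and maximum of the sampled voting variables, so a checkpoint at which $M_j=m_j$ holds for one (hence every) node is exactly a checkpoint at which all nodes carry identical sampled values; after step $T$ this is the case and remains so, triggering termination at the next checkpoint, while conversely the mass invariants of \cite{2018:RikosHadj} (total $z$-mass $n$, a single indivisible unit of $y$-mass) are used to show that such network-wide agreement can occur only in the converged state. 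Taking $k_0$ to be a checkpoint at least $D'$ steps beyond $T$ finishes the proof. I expect the main obstacle to be this last step — ruling out a premature, false-positive termination — since it requires exploiting the internal structure of the scheme of \cite{2018:RikosHadj} rather than a black-box use of Theorem~\ref{Conver_Quant_Av}; a secondary subtlety is making the per-round progress bound $\delta$ genuinely uniform over all reachable candidate configurations.
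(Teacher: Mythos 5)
Your proposal is correct in substance and follows the paper's top-level decomposition — analyze the leader election of Algorithm~\ref{algorithm_LeaderElection} separately, then reduce the rest to the convergence analysis of the quantized-averaging iteration — but both sub-arguments are carried out by a different route. For the leader election, the paper's Lemma~\ref{Lemma_leader} argues that each round eliminates at least one candidate with probability at least $1-1/M$ and bounds the probability of accumulating $n-1$ eliminations by a binomial tail (which forces the side condition $U_v>2(n-1)$); your single-round ``collapse'' event (one candidate draws the unique maximum, all others draw $0$) yields the cleaner geometric bound $(1-\delta)^{U_v}$ with no such restriction, at the price of a much smaller per-round constant $\delta=\prod_i(Up_i+1)^{-1}$ — and the uniformity you flag as a subtlety is already settled by your own bound, since $\delta$ minorizes the collapse probability for every candidate subset. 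For the averaging phase, the paper does not use Theorem~\ref{Conver_Quant_Av} as a black box: it defers to the proof of Theorem~\ref{thm:degree_estimation}, i.e., the explicit token-merging random-walk estimates of Lemmas~\ref{Lemma1_prob} and~\ref{tokenmeet}, which produce a concrete bound $k_0\ge (n-1)\tau' D+(n-1)\tau'' D+D'$; your tightness argument ($T<\infty$ a.s.\ implies $\Pr[T\le t_0D']$ can be made at least $\sqrt{p_0}$) establishes the existence claim of the theorem equally well but yields no quantitative estimate of $k_0$. Finally, the step you single out as the main obstacle — excluding a premature $M_j=m_j$ agreement before \eqref{number_nodes_network} holds — is indeed the one step that is proved nowhere: the paper's argument simply asserts that after an additional $D'$ steps each node determines that convergence has occurred, so your explicit identification of the false-positive issue (and of the mass invariants needed to rule it out) points to a gap in the paper's own proof rather than a defect of yours; your side remark that $z^s_j\to n$ holds regardless of how many leaders are elected is also correct and slightly sharpens the statement for problem \textbf{P2}.
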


\begin{proof}
%The proof of Theorem~\ref{thm:main} is presented in Appendix~\ref{appendix:A}.
See Appendix~\ref{appendix:B}.
\end{proof}

\begin{remark} 
Apart from guaranteeing that the exact number of nodes is computed, Algorithm~\ref{algorithm_networksize} establishes also that the process is terminated (so that other algorithms can be initiated after the convergence of Algorithm~\ref{algorithm_networksize}). 
If operation termination was not required, then a variation of Algorithm~\ref{algorithm_networksize} could be used to calculate the network size in a finite number of steps without electing a leader node. 
Specifically, during this variation each node $v_j$ initializes $z_j[1] = 1$ and executes iteration steps $2$--$6$ of Algorithm~\ref{algorithm_networksize}. 
During this execution, there exists $k_0$, for which $z^s_j[k] = n$ for every $v_j \in \mathcal{V}$, for $k \geq k_0$. 
\end{remark}
% \textbf{Comparison with Previous Literature.} 
% Algorithm~\ref{algorithm_averagedegree} and Algorithm~\ref{algorithm_networksize} are a major departure from the current literature. 
% Specifically, in the current literature every algorithm operates with real values. 
% This increases significantly the 

% \begin{remark}
% \todo{speak that the result is calculated exactly. Also, emphasize on the contribution: the first leader election algorithm which operates with quantized values and converges according to a probability. 
% (i) the distributed stopping which is adjusted to the exact quant consensus + the algorithm has asynchronous updates + Speak that results can be extended for undirected graphs}. 
% \end{remark}

% \todo{REMARK: build on: Algorithm of Iman does not give the exact value - for large scale networks Iman does not have the exact value - therefore we cannot compute the exact number of nodes. 
% Also if Iman has large n, -- 1/n is very small number, then small errors in small values will lead to great miscalculation of exact value. 
% }

\begin{remark}    
%The operation of Algorithm~\ref{algorithm_networksize} relies on the result of Algorithm~\ref{algorithm_LeaderElection}. More specifically, 
Algorithm~\ref{algorithm_LeaderElection} is executed as an initialization step of Algorithm~\ref{algorithm_networksize}. 
%This means that the Iteration Steps of Algorithm~\ref{algorithm_networksize} start once Algorithm~\ref{algorithm_LeaderElection} terminates. 
%This may increase significantly the required number of time steps for convergence of Algorithm~\ref{algorithm_networksize} (see Fig.~\ref{parallel_consec} (A)). 
However, %it is important to note that 
both can be executed in parallel. 
%More specifically, the Iteration Steps of Algorithm~\ref{algorithm_LeaderElection} are executed for $U_v D'$ time steps along with the Iteration Steps of Algorithm~\ref{algorithm_networksize}. 
The strategy of parallel execution significantly decreases the required number of time steps for convergence of Algorithm~\ref{algorithm_networksize} as demonstrated in Section~\ref{sec:results} (see Fig.~\ref{parallel_consec}). 
\end{remark}

% ===============================================
%
%
% SIMULATIONS
%
%
% ===============================================

\vspace{-.5cm}

\section{Simulation Results} \label{sec:results}

%\todo{??? numerical comparison with the existing approach should be presented - OK ???}

% \todo{OTHER ALGORITHMS ARE EITHER ASYMPTOTIC - WE DO NOT COMPARE BECAUSE OTHER ALGORITHMS ARE NOT QUANTIZED AND CALCULATE RESULT WITH ERROR ++ MENTION THIS ALSO TO INTRODUCTION}

In this section, we present simulation results in order to demonstrate the operation of our proposed algorithms and their potential advantages. 
For both algorithms we focus on a random digraph of $20$ nodes and show how the nodes' states converge to the desired value in finite time. %\AR{almost surely (i.e., with probability one)}. 
We also demonstrate the distributed stopping capabilities of our algorithms. 
Finally, we analyze numerically the convergence time of our algorithms, and we present the distributions of the required number of time steps for convergence. 
Our simulations emphasize the novelty of our algorithms which, to the best of our knowledge, are the first that use quantized values to calculate the exact values of the average degree and size of a network while also providing strong theoretical guarantees.
Note here that comparing against other algorithms from current literature is not feasible  due to the fact that all other methods in the literature operate with real-values and exhibit asymptotic convergence. 
%Specifically, current approaches require an infinite number of time steps to calculate the exact values of the parameters which we calculate in finite time. 
%However, an extended analysis of the operational advantages of our algorithms in terms of communication efficiency and convergence speed compared to algorithms in the literature will be one in an extended version of this paper.}

%due to space limitations, we do not provide 

\vspace{.3cm}

\noindent
\textbf{Average Degree Computation of a Random Network of $20$ Nodes.} 
In this section we demonstrate the operation of Algorithm~\ref{algorithm_averagedegree} over a random digraph of $20$ nodes. 
Each edge of the digraph was generated with probability $0.5$, and the diameter is equal to $D=3$. 
The average degree of the digraph is equal to $\dfrac{206}{20} = 10.3$. 
During the execution of Algorithm~\ref{algorithm_averagedegree} each node has knowledge of an upper bound on the network's diameter equal to $D' = 4$. 

In Fig.~\ref{10_3}, we plot the evolution of the state variable $q_j^s[k]$ of every node $v_j$. 
We can see that Algorithm~\ref{algorithm_averagedegree} converges after $74$ time steps to the exact solution. 
Specifically, each node calculates the quantized fraction ${106}/{20}$ which is equal to the average degree in the network. 
Furthermore, we can see that after $78$ iterations each node terminates its operation since it has knowledge of $D' = 4$ which is an upper bound on the network diameter. 

\begin{figure}[t]
    \centering
    \includegraphics[width=.82\linewidth]{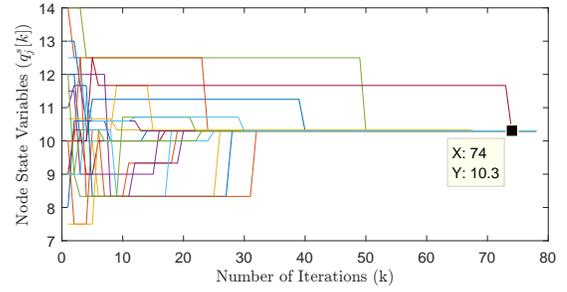}
    \caption{Execution of Algorithm~\ref{algorithm_averagedegree} over a random digraph of $20$ nodes with diameter $D=3$.}
    \label{10_3}
\end{figure}

In Fig.~\ref{120_96_10000try}, we present $10000$ executions of Algorithm~\ref{algorithm_averagedegree} over a random digraph of $20$ nodes. 
Each edge of the digraph was generated with probability $0.5$, and the diameter is equal to $D=3$ for each of the $10000$ executions. 
The average number of time steps for convergence of Algorithm~\ref{algorithm_averagedegree} is equal to $120.96$. 
In Fig.~\ref{120_96_10000try} we can see that in most cases Algorithm~\ref{algorithm_averagedegree} requires $70$--$160$ iterations for convergence. 

\begin{figure}[t]
    \centering
    \includegraphics[width=.85\linewidth]{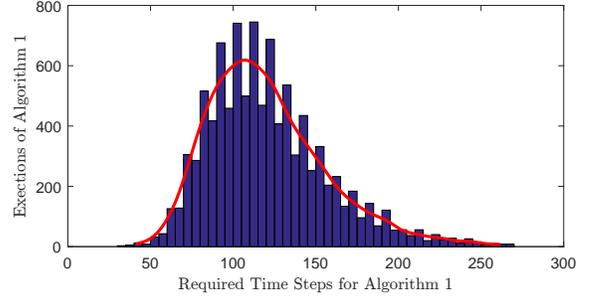}
    \caption{Required time steps for convergence of $10000$ executions of  Algorithm~\ref{algorithm_averagedegree} over a random digraph of $20$ nodes with diameter $D=3$.}
    \label{120_96_10000try}
\end{figure}

\vspace{.3cm}

\noindent
\textbf{Size Computation of a Random Network of $20$ Nodes.} 
In this section we demonstrate the operation of Algorithm~\ref{algorithm_networksize} over a random digraph of $20$ nodes. 
The parameters of the digraph are the same as in the previous example where we demonstrated Algorithm~\ref{algorithm_averagedegree}, and each node also has knowledge of $D' = 4$. 
Furthermore, we have that $U_v = 20$. 
This means that Algorithm~\ref{algorithm_LeaderElection} is executed for $80$ time steps. 

In Fig.~\ref{20nodes}, we plot the evolution of the state variable $z_j^s[k]$ of every node $v_j$. 
We can see that Algorithm~\ref{algorithm_networksize} calculates after $160$ time steps the quantized fraction ${1}/{20}$, where the denominator is equal to the number of nodes in the network. 
Finally, after $164$ iterations each node terminates its operation since it has knowledge of $D' = 4$. 
Note that in Fig.~\ref{20nodes}, we plot the evolution of $z_j^s[k]$ for time steps $k \geq 80$, since Algorithm~\ref{algorithm_LeaderElection} is executed for $80$ time steps. 

\begin{figure}[t]
    \centering
    \includegraphics[width=.85\linewidth]{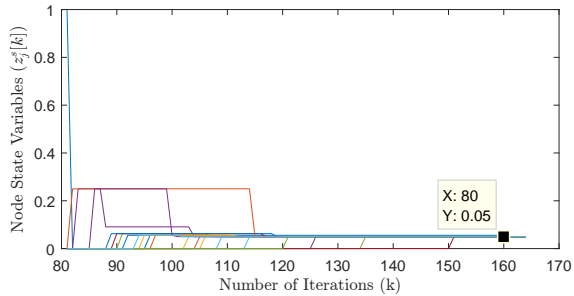}
    \caption{Execution of Algorithm~\ref{algorithm_networksize} over a random digraph of $20$ nodes with diameter $D=3$.} 
    \label{20nodes}
\end{figure}

In Fig.~\ref{parallel_consec}, we present $10000$ executions of Algorithm~\ref{algorithm_networksize} over a random digraph of $20$ nodes. 
Each edge of the digraph was generated with probability $0.5$, the diameter is equal to $D=3$ for each of the $10000$ executions, and each node also has knowledge of $D' = 4$. 
In Fig.~\ref{parallel_consec} (A) we execute Algorithm~\ref{algorithm_LeaderElection} before Algorithm~\ref{algorithm_networksize} for $U_v = 20$ and $D' = 4$ (i.e., we execute Algorithm~\ref{algorithm_LeaderElection} for $80$ time steps). 
In Fig.~\ref{parallel_consec} (B) we execute Algorithm~\ref{algorithm_LeaderElection} in parallel with Algorithm~\ref{algorithm_networksize}. 
The average number of time steps for convergence of Algorithm~\ref{algorithm_networksize} is equal to $198.64$ for (A), and $119.60$ for (B). 
In Fig.~\ref{parallel_consec} (A) we can see that in most cases Algorithm~\ref{algorithm_networksize} requires $150$--$240$ iterations for convergence. 
However, in Fig.~\ref{parallel_consec}~(B) we can see that in most cases it requires $80$--$140$. 
More specifically, in Fig.~\ref{parallel_consec}~(B) we can see that $1400$ executions of  Algorithm~\ref{algorithm_networksize} require $80$ -- $90$ time steps to converge. 
This is mainly due to the fact that the Iteration Steps of Algorithm~\ref{algorithm_networksize} have reached convergence, but nodes need to implement Algorithm~\ref{algorithm_LeaderElection} for $80$ time steps. 
As a result, parallel execution of Algorithm~\ref{algorithm_LeaderElection} with Algorithm~\ref{algorithm_networksize} is better, since a significantly smaller number of time steps is required for convergence. 

\begin{figure}[t]
    \centering
    \includegraphics[width=.49\linewidth]{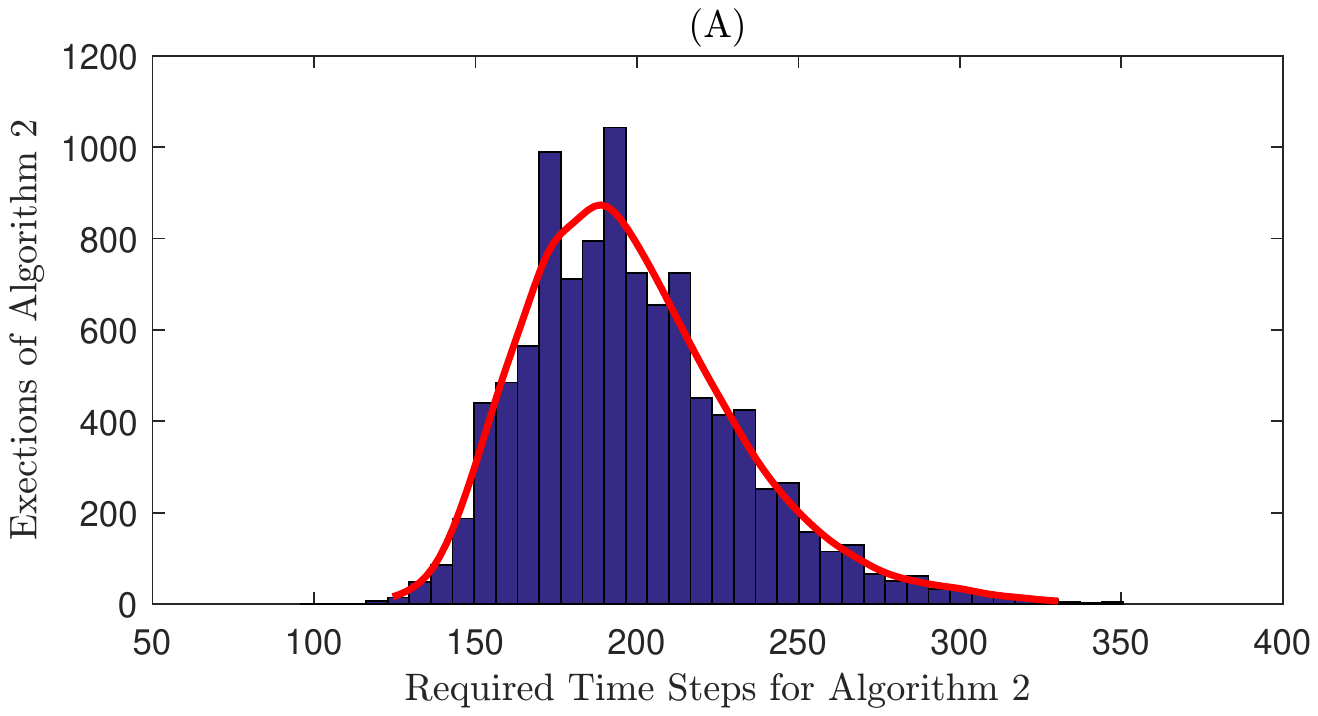} %\\ \vspace{.3cm}
    \includegraphics[width=.49\linewidth]{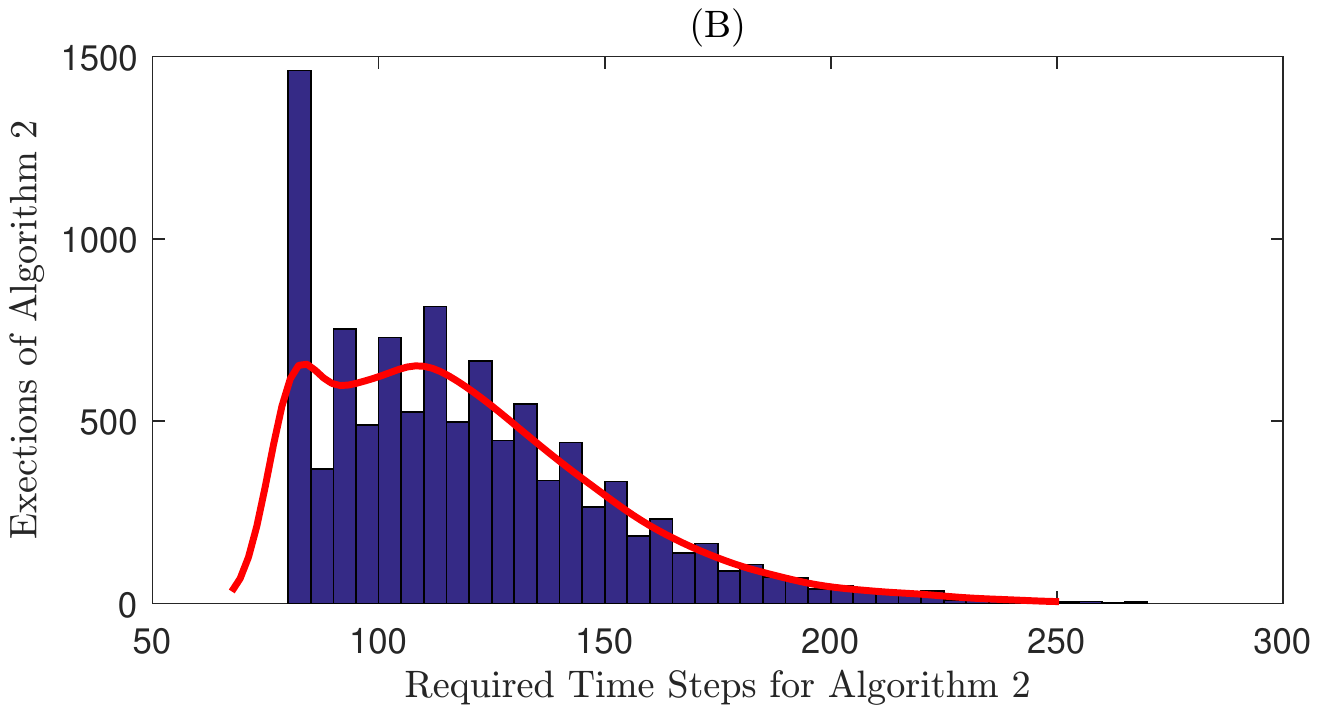}
    \caption{Required time steps for convergence of $10000$ executions of  Algorithm~\ref{algorithm_networksize} over a random digraph of $20$ nodes with diameter $D=3$. (A) Execution of Algorithm~\ref{algorithm_LeaderElection} before Algorithm~\ref{algorithm_networksize}. (B) Execution of Algorithm~\ref{algorithm_LeaderElection} in parallel with Algorithm~\ref{algorithm_networksize}.} 
    \label{parallel_consec}
\end{figure}

\textbf{Leader Election via Algorithm~\ref{algorithm_LeaderElection}.} 
We now analyze the probability of having one leader after the execution of Algorithm~\ref{algorithm_LeaderElection}, over a random digraph of $20$ nodes. 
In Fig.~\ref{up_4_5_8_10000iter} we present $10000$ executions of Algorithm~\ref{algorithm_LeaderElection} for $U_v = 10$. 
We plot the instances for which we have more than one leader nodes during the execution of Algorithm~\ref{algorithm_LeaderElection}, for the cases where (i) $\eta_j' \in \{ 0, 1, ..., \ 15 \}$, (ii) $\eta_j' \in \{ 0, 1, ..., \ 31 \}$, and (iii) $\eta_j' \in \{ 0, 1, ..., \ 255 \}$ for every node $v_j$. 
We can see that if we increase the range of $\eta_j'$ for every node $v_j$, the convergence rate of Algorithm~\ref{algorithm_LeaderElection} improves greatly. 
Also, after $5$ executions of Algorithm~\ref{algorithm_LeaderElection} (i.e., $U_v = 5$), the number of instances where we have multiple leader nodes is equal to zero. 
This means that during Algorithm~\ref{algorithm_LeaderElection} for $U_v \geq 5$, only one node is leader node. 

\begin{figure}[t]
    \centering
    \includegraphics[width=\linewidth]{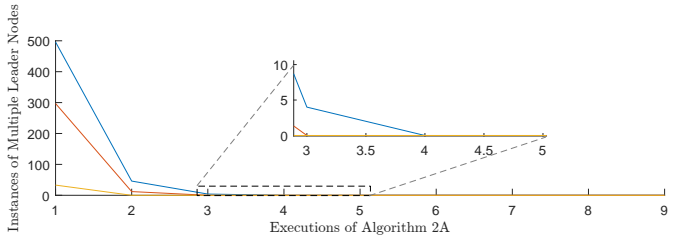}
    \caption{Instances where multiple nodes are leader nodes during $10000$ executions of Algorithm~\ref{algorithm_LeaderElection}.} 
    \label{up_4_5_8_10000iter}
\end{figure}

% \todo{insert table of probabilities for Algorithm2A.. probability to have one leader... prob to have multiple leaders}. 

% \todo{table probability of one leader, >1 leader, vs UP 1 2 5 10, 20, scale 0-255} 

% ===============================================
%
%
% CONCLUSIONS
%
%
% ===============================================

\vspace{-.4cm}

\section{Conclusions and Future Directions}\label{sec:conclusions}

%In this paper, we considered the problem of computing (i) the average degree, and (ii) the size of a given network. The operation of our algorithms rely on event-triggered updates and each node processes and transmits quantized values over directed networks. 
We showed that our algorithms are able to compute the exact values of the average degree and the size of a network after a finite number of time steps. 
Additionally, our algorithms allow each node to determine in a distributed fashion whether convergence has been achieved, and thus terminate its operation. 
Furthermore, in order to implement our algorithms, we present the first leader election strategy which relies on quantized operation (i.e., nodes process and transmit quantized information). 
Finally, we have demonstrated the operation of our algorithm over random directed networks and illustrated its finite time convergence. 

Extending our algorithms to achieve fully asynchronous operation which is desirable in large-scale networks, and operating over unreliable networks (e.g., packet dropping links) are two of our main future directions. 

\vspace{-.3cm}

% ------------------------------------------------------------------------------
% Bibliography
% ------------------------------------------------------------------------------
%\bibliographystyle{IEEEtran}
%\bibliography{bibliography}

%\vspace{.5cm}
\appendices
% ===============================================
%
%
% Proof of ....
%
%
% ===============================================
\section{Proof of Theorem~\ref{thm:degree_estimation}}
\label{appendix:A}

We first consider Lemma~\ref{Lemma1_prob}, which is necessary for our subsequent development. 

\begin{lemma}[\hspace{-0.00001cm}\cite{2020:RikosHadj_IFAC}]
\label{Lemma1_prob}
Consider a strongly connected digraph $\mathcal{G}_d = (\mathcal{V}, \mathcal{E})$ with $n=|\mathcal{V}|$ nodes and $m=|\mathcal{E}|$ edges.
Suppose that each node $v_j$ assigns a nonzero probability $b_{lj}$ to each of its outgoing edges $m_{lj}$, where $v_l \in \mathcal{N}^+_j \cup \{v_j\}$, as 
follows  
\begin{align*}
b_{lj} = \left\{ \begin{array}{ll}
         \frac{1}{1 + \mathcal{D}_j^+}, & \mbox{if $l = j$ or $v_{l} \in \mathcal{N}_j^+$,}\\
         0, & \mbox{if $l \neq j$ and $v_{l} \notin \mathcal{N}_j^+$.}\end{array} \right. 
\end{align*}
At time step $k=0$, node $v_j$ holds a ``token" while the other nodes $v_l \in \mathcal{V} - \{ v_j \}$ do not. 
Each node $v_j$ transmits the ``token'' (if it has it, otherwise it performs no transmission) according to the nonzero probability $b_{lj}$ it assigned to its outgoing edges $m_{lj}$. 
The probability $P^{D}_{T_i}$ that the token is at node $v_i$ after $D$ time steps satisfies 
$
P^{D}_{T_i} \geq (1+\mathcal{D}^+_{\max})^{-D} > 0 , 
$
where $\mathcal{D}^+_{\max} = \max_{v_j \in \mathcal{V}} \mathcal{D}^+_{j}$. 
\end{lemma}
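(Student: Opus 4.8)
The plan is to exploit strong connectivity to exhibit a single token trajectory of length exactly $D$ that ends at $v_i$, and to lower-bound the probability of that trajectory by $(1+\mathcal{D}^+_{\max})^{-D}$; since $P^{D}_{T_i}$ is the sum over \emph{all} length-$D$ trajectories ending at $v_i$, exhibiting one such trajectory with sufficiently large probability already yields the bound.

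First I would observe that the token-passing process is a time-homogeneous Markov chain on $\mathcal{V}$ whose one-step transition probabilities are exactly the entries $b_{lj}$ (the token at $v_j$ moves to $v_l$ with probability $b_{lj}$, self-transitions included). Hence the probability of any prescribed finite trajectory of the token equals the product of the corresponding transition probabilities. Next, since $\mathcal{G}_d$ is strongly connected and $D$ is its diameter, there exists a directed path $v_j \equiv v_{l_0}, v_{l_1}, \dots, v_{l_\ell} \equiv v_i$ with $(v_{l_{\tau+1}}, v_{l_\tau}) \in \mathcal{E}$ for $\tau = 0,\dots,\ell-1$ and $\ell \le D$ (with $\ell = 0$ in the degenerate case $v_i = v_j$). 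I then consider the specific trajectory in which the token follows this path for its first $\ell$ steps and then stays at $v_i$, using the virtual self-edge, for the remaining $D - \ell$ steps. Each of the first $\ell$ transitions has probability $b_{l_{\tau+1},l_\tau} = 1/(1+\mathcal{D}^+_{l_\tau}) \ge 1/(1+\mathcal{D}^+_{\max})$, and each of the final $D-\ell$ self-transitions has probability $b_{ii} = 1/(1+\mathcal{D}^+_i) \ge 1/(1+\mathcal{D}^+_{\max})$. Multiplying these $D$ factors shows that this trajectory occurs with probability at least $(1+\mathcal{D}^+_{\max})^{-D}$.

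Finally I would note that $P^{D}_{T_i}$ equals the sum of the probabilities of all trajectories of the token that start at $v_j$ and are located at $v_i$ after $D$ steps; since the trajectory constructed above is one of these summands and every summand is nonnegative, we conclude $P^{D}_{T_i} \ge (1+\mathcal{D}^+_{\max})^{-D} > 0$, which is the claim.

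I do not expect a genuine obstacle here; the only points needing a little care are: (i) justifying that the self-edge probability $b_{jj}$ is strictly positive for every node so that the ``padding'' self-transitions are legitimate — this is immediate from the stated probability assignment, which gives weight $1/(1+\mathcal{D}^+_j)$ to the virtual self-edge; (ii) treating the degenerate case $v_i = v_j$ uniformly, which is covered by $\ell = 0$; and (iii) invoking the definition of the diameter correctly to ensure $\ell \le D$ holds for \emph{every} ordered pair $(v_j, v_i)$. Everything else is a direct product-of-probabilities estimate.
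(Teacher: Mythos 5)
Your proof is correct: the paper itself gives no proof of this lemma (it is imported verbatim from the cited reference), and your argument --- follow a shortest directed path of length $\ell \le D$ from $v_j$ to $v_i$, pad with $D-\ell$ self-transitions via the virtual self-edge, and lower-bound each of the $D$ transition probabilities by $(1+\mathcal{D}^+_{\max})^{-1}$ --- is exactly the standard argument used in that reference. The three points of care you flag (positivity of $b_{jj}$, the degenerate case $v_i = v_j$, and the diameter bounding the shortest path for every ordered pair) are all handled correctly.
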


We now consider Lemma~\ref{tokenmeet}, which analyzes the probability according to which two tokens performing a random walk visit a specific node at the same time step. 
The proof is similar to Lemma~\ref{Lemma1_prob},  \textit{mutatis mutandis}, and is omitted due to space limitations. 

\begin{lemma}
\label{tokenmeet}
Consider a strongly connected digraph $\mathcal{G}_d = (\mathcal{V}, \mathcal{E})$ with $n=|\mathcal{V}|$ nodes and $m=|\mathcal{E}|$ edges.
Suppose that each node $v_j$ assigns a nonzero probability $b_{lj}$ to each of its outgoing edges $m_{lj}$, where $v_l \in \mathcal{N}^+_j \cup \{v_j\}$, as follows  
\begin{align*}
b_{lj} = \left\{ \begin{array}{ll}
         \frac{1}{1 + \mathcal{D}_j^+}, & \mbox{if $l = j$ or $v_{l} \in \mathcal{N}_j^+$,}\\
         0, & \mbox{if $l \neq j$ and $v_{l} \notin \mathcal{N}_j^+$.}\end{array} \right. 
\end{align*}
At time step $k=0$, nodes $v_i$, $v_j$ hold a ``token" while the other nodes $v_l \in \mathcal{V} - \{ v_j, v_i \}$ do not (i.e., there are two tokens in the network). 
Each node $v_j$ transmits the ``token'' (if it has it, otherwise it performs no transmission) according to the nonzero probability $b_{lj}$ it assigned to its outgoing edges $m_{lj}$. 
After $D$ time steps, the probability $P^{D}_{TT_l}$ that the two tokens visit a specific node at the same time step is 
$
P^{D}_{TT_l} \geq \sum_{l=1}^{n} (1+\mathcal{D}^+_{\max})^{-(2D)} > 0 , 
$
where $\mathcal{D}^+_{\max} = \max_{v_j \in \mathcal{V}} \mathcal{D}^+_{j}$. 
\end{lemma}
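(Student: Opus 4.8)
The plan is to lift the single-token argument of Lemma~\ref{Lemma1_prob} to two tokens by regarding them as two \emph{independent} random walks on $\mathcal{G}_d$, lower bounding for each vertex the probability that both walks sit on that vertex after $D$ steps, and then summing these contributions over the $n$ vertices, which index pairwise disjoint events.

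First I would fix the model: let $T_1$ be the token starting at $v_i$ and $T_2$ the token starting at $v_j$; at each time step the node currently holding a token forwards it to an out-neighbour, or keeps it through the virtual self-edge, according to the probabilities $b_{lj}$ of \eqref{prob_assignment}, and the routing coins used for $T_1$ and for $T_2$ are drawn independently, so the two walks are independent.

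Next, fix an arbitrary vertex $v_l \in \mathcal{V}$ and argue exactly as in the proof of Lemma~\ref{Lemma1_prob}. Since $\mathcal{G}_d$ is strongly connected, there is a directed path of length $\ell_1 \le D$ from $v_i$ to $v_l$; with probability at least $(1+\mathcal{D}^+_{\max})^{-\ell_1}$ token $T_1$ traverses this path, and with probability at least $(1+\mathcal{D}^+_{\max})^{-(D-\ell_1)}$ it then idles at $v_l$ on the self-edge for the remaining $D-\ell_1$ steps, so $T_1$ is at $v_l$ after $D$ steps with probability at least $(1+\mathcal{D}^+_{\max})^{-D}$; the identical bound holds for $T_2$, using a path of length $\ell_2 \le D$ from $v_j$ to $v_l$. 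By independence, both $T_1$ and $T_2$ are at $v_l$ after $D$ steps with probability at least $(1+\mathcal{D}^+_{\max})^{-(2D)}$. The events ``$T_1$ and $T_2$ are both at $v_l$ after $D$ steps'', for $l=1,\dots,n$, are pairwise mutually exclusive since each token occupies exactly one vertex, so the probability $P^D_{TT_l}$ that the two tokens are co-located at some vertex after $D$ steps is the sum of the per-vertex probabilities, whence $P^D_{TT_l} \ge \sum_{l=1}^{n}(1+\mathcal{D}^+_{\max})^{-(2D)} = n\,(1+\mathcal{D}^+_{\max})^{-(2D)} > 0$, as claimed.

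The only point that needs care beyond a verbatim transcription of Lemma~\ref{Lemma1_prob} is the independence used in the product bound. It is cleanest to work, as in Lemma~\ref{Lemma1_prob}, in the conceptual model where each token carries its own routing randomness, so the two walks are genuinely independent. If instead one uses the mass-forwarding model in which a node holding both tokens ships them to the same out-neighbour, the two walks coalesce; but this can only make co-location \emph{more} likely --- once the tokens merge they share a vertex at every subsequent step --- so the same lower bound holds a fortiori. I therefore do not expect a genuine obstacle here, only the bookkeeping of the path-plus-self-loop construction, which is identical to that in Lemma~\ref{Lemma1_prob}; accordingly the authors' remark that the proof follows ``mutatis mutandis'' is justified.
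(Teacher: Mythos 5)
Your argument is correct and is exactly the ``mutatis mutandis'' extension of Lemma~\ref{Lemma1_prob} that the paper invokes (the paper omits the details): per-vertex path-plus-self-loop bounds for each walk, a product by independence, and a sum over the disjoint co-location events. Your extra remark handling the merged-token dynamics via a coupling (co-location at time $D$ for independent walks implies co-location for the merging process) is a worthwhile clarification that the paper's one-line justification glosses over.
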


Now we present the proof of Theorem~\ref{thm:degree_estimation}. 
%
% \begin{theorem}\label{Theorem2_Alg2_Converge}
% Consider a strongly connected digraph $\mathcal{G}_d = (\mathcal{V}, \mathcal{E})$ with $n=|\mathcal{V}|$ nodes and $m=|\mathcal{E}|$ edges. 
% At time step $k=1$, each node $v_j$ follows the Initialization and Iteration steps as described in Algorithm~\ref{algorithm1}. 
% For any probability $p_0$, where $0 < p_0 < 1$, there exists $k_0 \in \mathbb{N}$, so that with probability at least $p_0$, every node $v_j$ addresses problem \textbf{P1.} in Section~\ref{sec:probForm}. 
% \end{theorem}
%
%\begin{proof}
%The intuition of this proof is the following. 
The operation of Algorithm~\ref{algorithm_averagedegree} can be interpreted as the ``random walk'' of $n$ ``tokens'' in a Markov chain. 
Each token has a pair of values $y$, $z$. 
If two (or more) tokens visit the same node at the same time step $k$, they ``merge'' to a new single token (i.e., if two (or more) tokens merge, their $y$ values sum to the new $y$ value, and their $z$ values sum to the new $z$ value). 
Then, the new token performs a random walk in a Markov chain. 
Once all $n$ tokens merge to a final single token, this final single token has a pair of values $y$, $z$ whose ratio $y / z$ is equal to the average degree in the network. 
Thus, executing Algorithm~\ref{algorithm_averagedegree} for an additional finite number of time steps, this final single token will visit every node in the network. 

The structure of the proof comprises of three parts. 
In the fist part (\textbf{Part I}), we calculate the number of time steps $k_0'$ after which all tokens have merged to one single final token with probability at least $\sqrt{p_0}$. 
In the second part (\textbf{Part II}), we calculate the number of time steps $k_0''$ after which the final single token has visited every node in the network with probability at least $\sqrt{p_0}$. 
In the third part (\textbf{Part III}), we calculate the number of time steps $k_0$ after which \eqref{average_degree_def} holds for every node, and each node ceases transmissions. 

\textbf{Part I.} During the operation of Algorithm~\ref{algorithm_averagedegree}, from Lemma~\ref{tokenmeet} we have that, after $D$ time steps, the probability $P^{D}_{TT_l}$ that two (or more) tokens visit a specific node at the same time step is 
$
P^{D}_{TT_l} \geq \sum_{l=1}^{n} (1+\mathcal{D}^+_{\max})^{-(2D)} > 0 , 
$
where $\mathcal{D}^+_{\max} = \max_{v_j \in \mathcal{V}} \mathcal{D}^+_{j}$. 
This means that, after $D$ time steps, the probability $P^{D}_{NTT_l}$ that two (or more) tokens \textit{do not} visit a specific node at the same time step is 
\begin{equation}\label{two_not_visit}
    P^{D}_{NTT_l} \leq 1- \sum_{l=1}^{n} (1+\mathcal{D}^+_{\max})^{-(2D)} . 
\end{equation}
By extending this analysis, we choose $\varepsilon'$ (where $0 < \varepsilon' < 1$) for which it holds that
\begin{equation}\label{epsilon_value} 
    \varepsilon' \leq 1 - 2^{\frac{\log_2 \sqrt{p_0}}{n-1}} . 
\end{equation}
After $\tau' D$ time steps where 
\begin{equation}\label{tau_value} 
\tau' \geq \Big \lceil \dfrac{\log{\varepsilon'}}{\log{(1 - \sum_{l=1}^{n} (1+\mathcal{D}^+_{\max})^{-(2D)})}} \Big \rceil ,
\end{equation}
and $\varepsilon'$ fulfills \eqref{epsilon_value}, we have that the probability $P^{\tau D}_{NTT_l}$ that two (or more) tokens \textit{do not} visit a specific node at the same time step is 
% \begin{equation}\label{two_not_visit_tau} 
$
    P^{\tau D}_{NTT_l} \leq [P^{D}_{NTT_l}]^{\tau'} \leq \varepsilon' . 
$
% \end{equation}
This means that after $\tau' D$ time steps, the probability $P^{\tau D}_{TT_l}$ that two (or more) tokens visit a specific node at the same time step is $P^{\tau D}_{TT_l}  \geq 1 - \varepsilon'$.
Therefore, after $k_0' \geq (n-1) \tau' D$ time steps, the probability $P^{(n-1) \tau D}_{TT_l}$ that two (or more) tokens visit a specific node for $n-1$ instances at the same time step is 
\begin{equation}\label{all_visit_tau} 
    P^{(n-1) \tau D}_{TT_l}  \geq (1 - \varepsilon')^{(n-1)} \geq \sqrt{p_0} . 
\end{equation}

\textbf{Part II.} During the operation of Algorithm~\ref{algorithm_averagedegree}, from Lemma~\ref{Lemma1_prob} we have that the probability $P^{D}_{T_l}$ that a token visits a specific node after $D$ time steps is 
$P^{D}_{T_l} \geq (1+\mathcal{D}^+_{\max})^{-D} > 0$, where $\mathcal{D}^+_{\max} = \max_{v_j \in \mathcal{V}} \mathcal{D}^+_{j}$. 
This means that the probability $P^{D}_{NT_l}$ a token \textit{does not} visit a specific node after $D$ steps is 
\begin{equation}\label{one_not_visit}
    P^{D}_{NT_l} \leq 1- (1+\mathcal{D}^+_{\max})^{-D} . 
\end{equation}
We choose $\varepsilon''$ (where $0 < \varepsilon'' < 1$) for which it holds that 
\begin{equation}\label{one_epsilon_value} 
    \varepsilon'' \leq 1 - 2^{\frac{\log_2 \sqrt{p_0}}{n-1}} . 
\end{equation}
After $\tau'' D$ time steps where
\begin{equation}\label{one_tau_value} 
\tau'' \geq \Big \lceil \dfrac{\log{\varepsilon''}}{\log{(1 - (1+\mathcal{D}^+_{\max})^{-D})}} \Big \rceil ,
\end{equation}
and $\varepsilon''$ fulfills \eqref{one_epsilon_value}, we have that the probability $P^{\tau D}_{NT_l}$ that one token \textit{does not} visit a specific node is $P^{\tau D}_{NT_l} \leq [P^{D}_{NTT_l}]^{\tau''} \leq \varepsilon''$.
%\begin{equation}\label{one_not_visit_tau} 
%    P^{\tau D}_{NT_l} \leq [P^{D}_{NTT_l}]^{\tau''} \leq \varepsilon'' . 
%\end{equation}
This means that after $\tau'' D$ time steps, the probability $P^{\tau D}_{T_l}$ that one token visits a specific node is 
$
    P^{\tau D}_{TT_l}  \geq 1 - \varepsilon'' . 
$
Therefore, after $k_0'' \geq (n-1) \tau'' D$ time steps, the probability $P^{(n-1) \tau D}_{T_l}$ that one token visits a specific node for $n-1$ instances (i.e., it visits every node in the network) is
% \begin{equation}\label{one_all_visit_tau} 
$
    P^{(n-1) \tau D}_{T_l}  \geq (1 - \varepsilon'')^{(n-1)} \geq \sqrt{p_0} . 
$
% \end{equation}

\textbf{Part III.} 
During the operation of Algorithm~\ref{algorithm_averagedegree}, from Lemmas~\ref{Lemma1_prob} and~\ref{tokenmeet}, we can state that after $(n-1) \tau' D + (n-1) \tau'' D$ time steps, where $\tau'$ fulfills \eqref{tau_value} and $\tau''$ fulfills \eqref{one_tau_value}, we have that \eqref{average_degree_def} holds for every $v_j \in \mathcal{V}$ with probability at least $p_0$. 
Then, after an additional number of $D'$ time steps, each node will determine whether convergence has been achieved, and thus it will cease transmissions. 
As a result, during the operation of Algorithm~\ref{algorithm_averagedegree}, after $k_0 \geq (n-1) \tau' D + (n-1) \tau'' D + D'$ time steps, 
%where $\tau'$ fulfills \eqref{tau_value} and $\tau''$ fulfills \eqref{one_tau_value}, 
we have that each node addresses problem \textbf{P1} in Section~\ref{sec:probForm} with probability at least $p_0$. 
%\end{proof}

\section{Proof of Theorem~\ref{thm:size_estimation}}
\label{appendix:B}

We first consider Lemma~\ref{Lemma_leader}, which is necessary for our subsequent development. 

\begin{lemma}
\label{Lemma_leader}
Consider a strongly connected digraph $\mathcal{G}_d = (\mathcal{V}, \mathcal{E})$ with $n=|\mathcal{V}|$ nodes and $m=|\mathcal{E}|$ edges. 
Each node executes Algorithm~\ref{algorithm_LeaderElection}. 
After $U_v D$ time steps, a single node $v_j$ is the leader and every other node $v_i \in \mathcal{V} \setminus \{ v_j \}$ is a follower, with probability {that goes to one, as $U_v$ goes to infinity}. 
\end{lemma}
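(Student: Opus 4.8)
The plan is to follow the candidate set $C_t := \{\, v_j \in \mathcal{V} : \text{flag}^{\text{ld}}_j = 1 \text{ at the start of round } t \,\}$ through the $U_v$ rounds of Algorithm~\ref{algorithm_LeaderElection} (each round being a block of $D'$ consecutive time steps in which one $\max$-consensus is run), and to show that $C_t$ collapses to a singleton with probability tending to $1$ as $U_v \to \infty$.

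First I would isolate the deterministic structure of a single round. Because $D' \ge D$ and the synchronous $\max$-consensus recursion converges in at most $D$ steps, at the end of round $t$ every node holds the true network-wide maximum $M'_j = \max_{v_i \in \mathcal{V}} \eta'_i$ of the values drawn that round. A node that has already become a follower sets $\eta'_j = -1$, which is strictly smaller than any value in $\{0,1,\dots,Up_i\}$ a candidate may draw; hence the maximum is always attained inside $C_t$, so $C_{t+1} = \{\, v_j \in C_t : \eta'_j = \max_{v_i \in C_t}\eta'_i \,\}$ is nonempty and $C_{t+1} \subseteq C_t$. Thus $(|C_t|)_t$ is nonincreasing, bounded below by $1$, starts at $|C_1| = n$, satisfies $C_{t+1} \subsetneq C_t$ unless all candidates of round $t$ drew a common value, and freezes at a single leader once $|C_t| = 1$.

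Next I would produce a configuration-free lower bound on the per-round shrink probability. Fix a round with $|C_t| \ge 2$ and two distinct candidates $v_a, v_b \in C_t$ with draw ranges $\{0,\dots,Up_a\}$, $\{0,\dots,Up_b\}$ (each of size at least $2$, as the notation $\{0,1,\dots,Up_j\}$ indicates). A one-line computation gives $\Pr[\eta'_a = \eta'_b] = (\max\{Up_a,Up_b\}+1)^{-1} \le \tfrac12$, and if $\eta'_a \ne \eta'_b$ then the candidates did not all draw a common value, hence $C_{t+1}\subsetneq C_t$. Therefore, conditioned on any history with $|C_t| \ge 2$, the round is ``successful'' (i.e.\ $|C_{t+1}| \le |C_t| - 1$) with probability at least $q := \tfrac12$.

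Finally I would close by stochastic domination. Since $|C_1| = n$ and each successful round removes at least one node, $|C_t|$ reaches $1$ after at most $n-1$ successes. Coupling the conditional success indicators from below with i.i.d.\ $\mathrm{Bernoulli}(q)$ variables $X_1, X_2, \dots$, a short induction gives $|C_{t+1}| \le \max\{1,\ n - \sum_{s=1}^t X_s\}$ for every $t$; in particular $|C_{U_v+1}| = 1$ whenever $\sum_{s=1}^{U_v} X_s \ge n-1$. Since $\sum_{s=1}^{U_v} X_s \sim \mathrm{Binomial}(U_v,q)$, we obtain $\Pr[\, C_{U_v+1} \text{ is a singleton} \,] \ge \Pr[\mathrm{Binomial}(U_v,q) \ge n-1] \to 1$ as $U_v \to \infty$, i.e.\ the algorithm elects exactly one leader with probability tending to one over its $U_v D' \ge U_v D$ time steps. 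I expect the only delicate point to be the domination step: the success probability really does depend on the current state $C_t$, so one must lean on the uniform bound $q=\tfrac12$ together with the elementary observation that rounds after a singleton is reached cannot undo it — neither hard, but both needing a careful statement.
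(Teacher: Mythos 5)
Your proof is correct and follows essentially the same route as the paper's: both reduce each $D'$-step round to a single Bernoulli trial whose success (elimination of at least one candidate) has probability bounded below by a universal constant --- you use $q=\tfrac12$, the paper uses $p=1-1/M$ with $M=2^{\mathrm{bits}}$ --- and then conclude from the probability that a $\mathrm{Binomial}(U_v,p)$ variable reaches $n-1$ successes, which tends to one as $U_v\to\infty$. Your write-up is somewhat more careful about the deterministic structure of a round (followers' value $-1$ never being the maximum, the candidate set being nonincreasing and absorbing at a singleton) and about the stochastic-domination step, which the paper only sketches as a ``very conservative analysis.''
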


\begin{proof}
When node $v_j$ selects a value randomly depending on the number of bits allocated for communication, it basically samples from a random variable $X_j$ with probability mass function a discrete uniform distribution on the integers $0,1,2,\ldots,M-1$, where $M\coloneqq 2^{\mathrm{bits}}$. 
Let $X_1, X_2, \ldots, X_n$ be independent identically distributed (i.i.d.) random variables (representing the random variables of the $n$ nodes in the network). 
% with probability mass function a discrete uniform distribution on the integers $0,1,2,\ldots,M-1$. 
%We define the following random variable: $Z=\max\{X_1, X_2, \ldots, X_n\}$
%The probability that $X_1$ and $X_2$, at least, select the same maximum value is given by 
%\begin{align*}
%\mathbb{P}&\{X_1=t,X_2=t,X_3 \leq t, \ldots, X_n \leq t\} \\
%&= \mathbb{P}\{X_1=t\}\mathbb{P}\{X_2=t\}\prod_{i=3}^n \mathbb{P}\{X_i\leq t\}.    
%\end{align*}

Let $Y=\max\{X_1,\ldots,X_n\}$. %Then, $P[Y=x]$ is given by
%\begin{align*}
%P[Y=x] &= P[Y\leq x] - P[Y\leq x-1] \; ,~x\in\{0,\ldots,M-1\} \\  
%& = \left(\frac{x+1}{M}\right)^n -  \left(\frac{x}{M}\right)^n \; .
%\end{align*}
The probability of event $A_{\ell}$ being that $\ell<n$ nodes have the maximum value is given by 
\begin{align*}
P[A_{\ell}] 
%&= {n\choose \ell} P[X_{j_1}=Y\cap X_{j_2}=Y\cap\ldots \cap X_{j_\ell}=Y] \\  
%& = \sum_{x=0}^{M-1} P[X_{j_1}=Y\cap\ldots \cap X_{j_\ell}=Y|Y=x] P[Y=x]\\
%& ={n\choose \ell}\sum_{x=0}^{M-1} \left(\prod_{m=1}^{\ell}P[X_{j_m}=x] \right)P[Y=x] =
= {n\choose \ell}\left(\frac{1}{M}\right)^{\ell}\left(\frac{Y-1}{M}\right)^{n-\ell} \;.
\end{align*}
%Hence, the probability of event $E$ being that at least one node is being eliminated from the process is $P[E]=1-1/M^n$. 
%The probability of event $B_1$ being that $\ell_1<n$ nodes are eliminated in round $1$ is, therefore, $P[B_1]=1/M^{n-\ell_1}$. The probability of event $B_2$ being that $\ell_2<n-\ell_1$ nodes are eliminated in round $2$ is, therefore,
%\begin{align*}
%P[B_2] &=\sum_{\ell_1=0}^{n-1}P[B_2|B_1]P[B_1] \\  
%\end{align*}
%
%
%The probability that any two nodes choose the same maximum integer value is upper bounded by the probability that any two nodes choose the same integer value. 
%Therefore, the probability that all $n$ nodes involved in a round select the same value is $M \times (1/M)^n = (1/M)^{n-1}$. 
%Thus, the probability that a node is eliminated at the first round is  $1-(1/M)^{n-1}$. 
Let $n(j)$ denote the number of nodes participating in the $\max-$consensus algorithm in round $j$, $j\in\{1,2, \ldots, U_v\}$. In the worst case (when $n(j)=2$),  a node is eliminated at round $j$ with probability at least $1-1/M$. Now, we proceed with a very conservative analysis to prove Lemma~\ref{Lemma_leader} (an extended and less conservative analysis will be given in the journal version of the paper). Consider a sequence of rounds $\{1, 2, \ldots, U_v\}$: at each round a node is eliminated with probability (at least) $p=1-1/M$. Otherwise, with probability (less than) $1-p=1/M$, no node is eliminated. Thus, the probability that we have a leader after $U_v$ rounds is the probability that we have $n-1$ eliminations: this is bounded from below by 
\begin{align*}
\sum_{k=n-1}^{U_v} {U_v\choose k} & p^{k} (1-p)^{U_v-k}= 1 - \sum_{k=0}^{n-2} {U_v\choose k} p^{k} (1-p)^{U_v-k} \\
& > 1 - (n-1) {U_v\choose n-1} p^{n-1} (1-p)^{U_v-(n-1)} \;,
\end{align*}
where $U_v$ is assumed to be larger than $2(n-1)$ for the last inequality to hold (but $U_v$ is not required to be even relevant with the network size, as shown in~Fig~\ref{up_4_5_8_10000iter}).

%\todo{??? Assumption 2 requires that all the nodes share the constant $U_v$. In the proof of Theorem 3, it is required that $U_v > 2(n-1)$. This implies that, in Assumption 2, the authors require that all the nodes share a type of upper bounds on the number of the nodes. I guess an external entity must give such information. Then, a natural question is: why doesn't the external entity tell the network size directly to the nodes? ???} 

%\todo{??? $U_v > 2(n-1)$ poses another issue. To ensure that this assumption is satisfied, one needs to take $U_v$ very large. Taking large $U_v$ would cause a long execution time of the Initialization - Step 2. This issue will be serious when considering a large-scale network or when the network size is very uncertain and, therefore, one needs to rely on a rough estimate of n.  ???}

%\todo{THEMIS WILL FIX BOTH PREVIOUS}

%Let $A$ denote the event that any two or more nodes choose the same maximum integer value and $B$ denote the event that any two or more nodes choose the same value (if more than two values are chosen by many nodes, we choose the one with the highest number of nodes that selected that value). Then, $\mathbb{P}(A) < \mathbb{P}(B)$, where $\mathbb{P}(B)$ is given by
%\begin{align*}
%\mathbb{P}(B) = 1- \mathbb{P}(B^c) = 1 - \frac{^MP_n}{M^n},
%\end{align*}
%where $^MP_n\coloneqq \frac{M!}{(M-n)!}$ is the permutation and $B^c$ is the complementary of event $B$. 

The major advantage of this method is that the number of nodes participating in the next $\max-$consensus is limited to the number of nodes that had the same maximum value in the preceding $\max-$consensus round. 
%The expected number of nodes participating to the next round is given by $n/M$. 
Therefore, especially for a considerably large value of $M$, the number of nodes participating in the next $\max-$consensus is much smaller. The procedure continues until there is a round with no two nodes that select the same maximum integer value (cf. Fig~\ref{up_4_5_8_10000iter}).     
%\mathbb{P}&\{X_1=t,X_2=t,X_3 \leq t, \ldots, X_n \leq t\} \\
%&= \mathbb{P}\{X_1=t\}\mathbb{P}\{X_2=t\}\prod_{i=3}^n \mathbb{P}\{X_i\leq t\}.    
\end{proof}

Once Algorithm~\ref{algorithm_networksize} finishes the Initialization steps and elects a leader node (see Algorithm~\ref{algorithm_LeaderElection}) with high probability (see Lemma~\ref{Lemma_leader}), it executes its Iteration steps. 
However, the Iteration steps of Algorithm~\ref{algorithm_networksize} are identical to Algorithm~\ref{algorithm_averagedegree}. 
Thus the proof of Theorem~\ref{thm:size_estimation} is similar to Theorem~\ref{thm:degree_estimation} and is omitted. 

\vspace{-.3cm}

% ------------------------------------------------------------------------------
% Bibliography
% ------------------------------------------------------------------------------
\bibliographystyle{IEEEtran}
\bibliography{bibliography}

\end{document}